\newcommand{\ra}[1]{\renewcommand{\arraystretch}{#1}}
\newcommand{\assign}{:=}
\newcommand{\infixand}{\text{ and }}
\newcommand{\nobracket}{}
\newcommand{\R}{\mathbb{R}}
\newcommand{\N}{\mathbb{N}}
\newcommand{\geoRP}[1]{\mathbf{#1}}
\newcommand{\Sig}[1]{\mathbf{#1}^{<\infty}}
\newcommand{\tmmathbf}[1]{\ensuremath{\boldsymbol{#1}}}
\newcommand{\tmop}[1]{\ensuremath{\operatorname{#1}}}
\newcommand{\vertiii}[1]{{\left\vert\kern-0.25ex\left\vert\kern-0.25ex\left\vert #1 
    \right\vert\kern-0.25ex\right\vert\kern-0.25ex\right\vert}}
\newenvironment{proof}{\noindent\textbf{Proof\ }}{\hspace*{\fill}$\Box$\medskip}
\newtheorem{example}{Example}}
\newtheorem{theorem}{Theorem}[section]
\newtheorem{remark}[theorem]{Remark}
\newtheorem{lemma}[theorem]{Lemma}
\newtheorem{proposition}[theorem]{Proposition}
\newtheorem{Assumption}[theorem]{Assumption}
\newtheorem{corollary}[theorem]{Corollary}
\newtheorem{definition}[theorem]{Definition}
\theoremstyle{break}
\DeclareMathOperator*{\esssup}{ess\,sup}
\def\dsqcup{\sqcup\mathchoice{\mkern-7mu}{\mkern-7mu}{\mkern-3.2mu}{\mkern-3.8mu}\sqcup}
\begin{document}
\title{Primal and dual optimal stopping with signatures\thanks{Acknowledgements: All authors gratefully acknowledge funding by the Deutsche Forschungsgemeinschaft (DFG, German Research Foundation) under Germany's Excellence Strategy – The Berlin Mathematics Research Center MATH+ (EXC-2046/1, project ID: 390685689). The authors would like to thank C. Cuchiero and S. Breneis for valuable remarks and helpful discussions about the global approximation in Section \ref{sec:globalapproxi}.}}

\author{Christian Bayer\thanks{Weierstrass Institut (WIAS), Berlin, Germany, email: bayerc@wias-berlin.de.} \quad Luca Pelizzari\thanks{Technische Universität Berlin and Weierstrass Institut (WIAS), Berlin, Germany, email: pelizzari@wias-berlin.de} \quad John Schoenmakers\thanks{Weierstrass Institut (WIAS), Berlin, Germany, email: schoenma@wias-berlin.de.}}

\maketitle

\begin{abstract}
We propose two signature-based methods to solve the optimal stopping problem -- that is, to price American options -- in non-Markovian frameworks. Both methods rely
on a global approximation result for $L^p-$functionals on rough path-spaces,
using linear functionals of robust, rough path signatures. In the primal formulation, we present a non-Markovian generalization of the famous Longstaff-Schwartz algorithm, using linear
functionals of the signature as regression basis. For the dual formulation, we parametrize the space of square-integrable martingales using linear functionals of the signature, and apply a sample average
approximation. We prove convergence for both methods and present first
numerical examples in non-Markovian and non-semimartingale regimes.
\\ \\ \textbf{Keywords:} Signature, optimal stopping, rough paths,  Monte Carlo, rough volatility. \\ \textbf{MSC2020 classifications:} 60L10, 60L20, 91G20, 91G60.
\end{abstract}

\section{Introduction}

Stochastic processes with memory play a more and more important role in the modelling of financial markets. In the modelling of equity markets, \emph{rough stochastic volatility models} are now part of the standard toolbox, see, e.g., Gatheral et al. \cite{gatheral2018volatility} and Bayer et al. \cite{bayer2016pricing}. In the same area, \emph{path-dependent stochastic volatility models} Guyon et al. \cite{guyon2023volatility} are a very powerful alternative for capturing memory-effects.  Processes with memory are also an essential tool for modelling the micro-structure of financial markets, driven by the market practice of splitting large orders in many medium size ones, as well as by the reaction of algorithmic traders to such orders. Seen from outside, this materializes as self-excitation of the order flow, and, consequently, \emph{Hawkes processes} are a fundamental tool for modelling order flows, see, e.g., Bouchaud et al. \cite{bouchaud2018trades}.
Beyond finance, processes with memory play an important role in the modelling of many natural phenomena (e.g., earthquakes, see Ogata et al. \cite{ogata1988statistical}) or social phenomena. 

In this paper we study optimal stopping problems in non-Markovian frameworks, that is the underlying price is possibly a stochastic process with memory. For concreteness' sake, let us introduce two processes determining the optimal stopping problem: an underlying \emph{state-process} $X$, together with its natural filtration $\mathbb{F}^X$, and a \emph{reward-process} $Z$, which is $\mathbb{F}^X-$adapted -- think about $X = (S,v)$ for a stock price process $S$ driven by a stochastic variance process $v$ and $Z_t = \phi(t,S_t)$. The optimal stopping problem then consists of solving the following optimization problem  \begin{equation}\label{eq:stoppingintro}
y_0=\sup_{\tau \in \mathcal{S}_0}E[Z_{\tau}],
\end{equation} where $\mathcal{S}_0$ denotes the set of $\mathbb{F}^X-$stopping times on $[0,T]$, for some $T>0$. We merely assume $\alpha$-Hölder continuity for $X$ in our framework, see Section \ref{sec:framework} below, in particular allowing non-Markovian and non-semimartingale state-processes $X$.

The lack of Markov property leads to severe theoretical and computational challenges in the context of optimal control problems, and thus in particular in the optimal stopping problem \eqref{eq:stoppingintro}. Indeed, the primary analytical and numerical framework for stochastic optimal control problems arguably is the associated Hamilton--Jacobi--Bellman (HJB) PDE, in the context of optimal stopping so-called \emph{free-boundary problems}, see Peskir and Shiryaev \cite{peskir2006optimal}. When the state process is not a Markov process, such PDEs do not exist a priori. As noted above, infinite-dimensional (BS)PDE formulations can be given, see, for instance, Bayer et. al \cite{BQY22} for a BSPDE-description of the American option price in rough volatility models. When the underlying dynamics is of stochastic Volterra type, \emph{path-dependent} HJB PDEs could probably be derived following the approach of Bonesini and Jacquier \cite{bonesini2023}. However, most \emph{numerical approximation methods} crucially rely on the Markov property, as well. 

It should be noted that, at least intuitively, all processes with memory can be turned into Markov processes by adding the history to the current state -- but see, e.g., Carmona and Coutin \cite{carmona1998fractional} for a more sophisticated approach in the case of fractional Brownian motion. Hence, theoretical and numerical methods from the Markovian world are, in principle, available, but at the cost of having to work in infinite-dimensional (often very carefully drafted, see. e.g., Cuchiero and Teichmann \cite{cuchiero2020generalized}) state spaces. On the other hand, \emph{Markovian approximations}, i.e., finite-dimensional Markov processes closely mimicking the process with memory, can sometimes be a very efficient surrogate model, especially when high accuracy is achievable with low-dimensional Markovian approximations, see, e.g., Bayer and Breneis \cite{bayer2023efficient}.

Inspired by many successful uses in machine learning (for time-series data), Kalsi et al. \cite{kalsi2020optimal} introduced a model-free method for numerically 
solving a stochastic optimal execution problem. The method is based on the \emph{path signature}, see, e.g., Friz and Victoir \cite{friz2010multidimensional}, and is applicable in non-Markovian settings. This approach was extended to optimal stopping problems in Bayer et al. \cite{bayer2021optimal}, where stopping times were parameterized as first hitting times of affine hyperplanes in the signature-space. A rigorous mathematical analysis of that method was performed and numerical examples verifying its efficiency were provided.

The signature $\mathbf{X}^{<\infty}$ of a path $X:[0,T] \rightarrow \mathbb{R}^d$, is given (at least formally) as the infinite collection of \emph{iterated integrals}, that is for $0\leq t \leq s \leq T$ $$\mathbf{X}^{<\infty}_{s,t} = \bigg \{ \int_s^t\int_s^{t_k}\dots \int_s^{t_2}dX_{t_1}^{i_1}\cdots dX_{t_k}^{i_k}: i_1,\dots,i_k \in \{1,\dots,d\}, k \geq 0 \bigg\}.$$ The signature characterizes the history of the corresponding trajectory, and, hence, provides a systematic way of ``lifting'' a process with memory to a Markov process by adding the past to the state. Relying only on minimal regularity assumptions, the corresponding encoding is efficient, and has nice algebraic properties. In many ways, (linear functionals of) the path signature behaves like an analogue of polynomials on path space, and can be seen as a canonical choice of basis functions on path space. For example, a \emph{Stone-Weierstrass} type of result shows that, restricted to compacts, continuous functionals on path spaces can be approximated by linear functionals of the signature, that is by linear combinations of iterated integrals, see for instance Kalsi et al. \cite[Lemma 3.4]{kalsi2020optimal}.  

As a first contribution, in Section \ref{sec:globalapproxi} we provide an abstract approximation result on $\alpha$-Hölder rough-path spaces, by linear functionals of the \emph{robust} signature, with respect to the $L^p$-norm, see Theorem \ref{mainresultappro} below. As a direct consequence, and under very mild assumptions, we can show that for any $\mathbb{F}^X-$progressive process $(\xi_t)_{t\in [0,T]}$, we can find a sequence $(l_n)_{n\in \mathbb{N}}$ of linear functionals on the state-space of the signature, such that \begin{equation}\label{eq:convergenceintro}
\lim_{n\to \infty }E\bigg [ \int_0^T(\xi_t-\langle \mathbf{X}^{<\infty}_{0,t},l_n\rangle)^2dt\bigg ]= 0,
\end{equation} see Corollary \ref{mainresultcoro}  below for the details. This result is in marked contrast to the standard universal approximation result for signatures as usually formulated, which only provides uniform convergence on compact subsets of the path space. Let us mention two related works on global approximation with signatures: First, in Cuchiero et al. \cite{cuchiero2023global} the authors study global signature approximations based on a version of the Stone-Weierstrass result for so-called \emph{weighted spaces}, see \cite[Theorem 3.6]{cuchiero2023global}. Compared to our theory, while they do not require a robust version of the signature, such weighted spaces need to be crafted carefully. Secondly, in the recent work Alaifari and Schell \cite{alaifari2023conditioning}, the authors obtain (independently from us) a similar global \(L^p\) approximation result for robust signatures of bounded variation paths, see \cite[Proposition 4.5]{alaifari2023conditioning}, but based on a Monotone Class rather than a Stone-Weierstrass argument.

Returning to the optimal stopping problem \eqref{eq:stoppingintro}, in Section \ref{sec: OptimalStoppingSection} we generalize two standard techniques from the Markovian case to the non-Markovian case using signatures, namely the \emph{Longstaff--Schwartz algorithm} \cite{longstaff2001valuing} and Rogers' \emph{dual martingale method} \cite{rogers2002monte}. Denoting by $Y$ the \emph{Snell envelope} to the optimal stopping problem, see below for more details, the Longstaff--Schwartz algorithm is based on the \emph{dynamic programming principle}, that is 
\begin{equation*}
    Y_t = \max \big( Z_t, \, E[Y_{t + \Delta t}| \mathcal{F}^X_t] \big).
\end{equation*}
If $X$ is a Markov process, then $E[Y_{t + \Delta t} |\mathcal{F}^X_t] = E[Y_{t + \Delta t} | X_t]$, which can be efficiently computed using regression (\emph{least-squares Monte Carlo}). In the non-Markovian case, an application of the global approximation result Theorem \ref{mainresultappro}, i.e. the convergence in \eqref{eq:convergenceintro}, shows that (under minimal assumptions) a Longstaff--Schwartz algorithm converges when the conditional expectation is approximated by linear functionals of the signature, that is
\[
    t \mapsto E[Y_{t + \Delta t}|\mathcal{F}^X_t] \approx \langle \mathbf{X}^{<\infty}_{0,t},\ell \rangle,
\]
see Proposition~\ref{THM:LSconvergence1}. \\ \\
Regarding the dual method, we rely on Rogers' characterization that 
\[
    y_0 = \inf_{M \in \mathcal{M}_0^2} E\Big[ \sup_{t \in [0,T]} (Z_t - M_t)\Big],
\]
where the $\inf$ is taken over all square-integrable martingales $M$ starting at $0$. If the underlying filtration is Brownian, such martingales can be written as stochastic integrals w.r.t.~a Brownian motion $W$, that is $M_t = \int_0^t \xi_s dW_s$ for some $\mathbb{F}^X-$progressive process $\xi$. The approximation result Theorem \ref{mainresultappro}, i.e. the convergence in \eqref{eq:convergenceintro}, suggests to approximate the integrand by linear functionals of the signature
\[
    t \mapsto \xi_t \approx \langle \mathbf{X}^{<\infty}_{0,t},\ell \rangle,
\]
and we prove convergence after taking the infimum over all linear functionals $l$, that is 
\begin{equation}y_0 = \inf_\ell E\bigg[ \sup_{t \in [0,T]} \bigg (Z_t - \int_0^t\langle \mathbf{X}^{<\infty}_{0,s},\ell \rangle dW_s\bigg )\bigg], \label{SA_LMC}
\end{equation}
see Proposition~\ref{propositionapproxi}. 
For numerically solving the dual problem \eqref{SA_LMC}  we carry out a \emph{Sample Average Approximation} (SAA) with respect to the  coefficients of 
the linear functional of the signature. For a Markovian environment, a related SAA procedure was earlier proposed in Desai et al. \cite{desai2012pathwise} and recently refined in Belomestny and Schoenmakers \cite{belomestny2023from} and  Belomestny et al. \cite{belomestny2022solving}
using a suitable  randomization. 
An important feature of the SAA method is that it relies on nonnested Monte Carlo simulation and  thus is very fast in comparison to the classical nested Monte Carlo method
by Andersen and Broadie \cite{andersen2004primal}.  

For both the Longstaff--Schwartz and the dual signature methods, we also prove convergence of the finite sample approximations when the number of samples grows to infinity, see Proposition \ref{THM:LSconvergence2} and Proposition \ref{THM:SAAapproxi2}. It is worth to notice that, after independent resimulations, the Longstaff--Schwartz algorithm  yields lower-biased, whereas the dual method gives upper-biased values to the optimal stopping problem \eqref{eq:stoppingintro}, and thus applying both methods produces confidence intervals for the true value of $y_0$. 

Finally, in Section \ref{sec:Numerics} we provide first numerical examples based on the primal and dual signature-based approaches, in two non-Markovian frameworks: First, in Section \ref{sec: fBM}, we study the task of optimally stopping \emph{fractional Brownian motion} for a wide range of Hurst-parameter $H\in (0,1)$, representing the canonical choice of a state-process leaving the Markov regime. The same problem was already studied in Becker et al. \cite{becker2019deep}, and later in Bayer et al.  \cite{bayer2021optimal}, and we compare our lower, resp. upper bounds with the results therein. Secondly, in Section \ref{sec:AmericanOptions} we consider the problem of computing American options prices in the \emph{rough Bergomi model} Bayer et al. \cite{bayer2016pricing}, and we compare our price intervals with Bayer et al. \cite{bayer2020pricing}, resp. Goudenege et al. \cite{goudenege2020machine}, where lower-bounds were computed in the same model.

\subsection{Notation}\label{presectionRP} 

For $d,K\in \mathbb{N}$ we define the so-called \emph{extended tensor-algebra}, and the $K$-step truncation thereof by $$T((\R^d))= \prod_{k\geq 0}(\R^d)^{\otimes k}, \quad T^{\leq K}(\R^d)= \prod_{k=0}^K(\R^d)^{\otimes k},$$ where we use the convention $(\R^d)^{\otimes 0} = \mathbb{R}$. For more details, including natural operations such as sum $+$ and product $\star$ on these spaces, see for instance Friz and Victoir \cite[Section 7.2.1]{friz2010multidimensional}. For any word $w=i_1\cdots i_n$ for some $n\in \N$ with letters $i_1,\dots,i_n \in \{1,\dots,d\}$, we define the \emph{degree} of $w$ as the length of the word, that is $\mathrm{deg}(w)=n$, and denote by $\emptyset$ the empty word with $\mathrm{deg}(\emptyset)=0$. Moreover, for $\mathbf{a}\in T((\R^d))$, we denote by $\langle \mathbf{a},w \rangle$ the element of $\mathbf{a}^{(n)} \in (\mathbb{R}^d)^{\otimes n}$ corresponding to the basis element $e_{i_1}\otimes \cdots \otimes e_{i_n}$. Denoting by $\mathcal{W}^d$ the linear span of words, the pairing above can be extended linearly $\langle \cdot,\cdot \rangle: T((\R^d))\times \mathcal{W}^d \rightarrow \R$. For an element $\ell\in \mathcal{W}^{d}$, that is $\ell= \lambda_1w_1+\dots+\lambda_nw_n$ for some words $w_1,\dots,w_n$ and scalars $\lambda_1,\dots,\lambda_n \in \mathbb{R}$, we define the degree of $\ell$ by $\mathrm{deg}(\ell):= \max_{1\leq i \leq n}\mathrm{deg}(w_i)$, and for $K\in \mathbb{N}$ we denote by $\mathcal{W}^d_{\leq K} \subset \mathcal{W}^d$ the subset of elements $l$ with $\mathrm{deg}(\ell) \leq K$. For two words $w$ and $v$ we denote by $\shuffle$ the \emph{shuffle product} \begin{equation}\label{def:shuffle-product}
w \shuffle \emptyset = \emptyset \shuffle w = w, \quad wi \shuffle vj :=  (w \shuffle vj )i +  (wi \shuffle v )j, \quad i,j\in \{1,\dots,d\},
\end{equation} which bi-linearly extends to the span of words $\mathcal{W}^d$. We further define the \textit{free nilpotent Lie group} over $\R^d$ by \begin{equation*}
G((\R^d))= \big \{\mathbf{a}\in T((\R^d)) \setminus{ \{\mathbf{0}\}}: \langle \mathbf{a},w \rangle \langle \mathbf{a},v \rangle = \langle \mathbf{a},w \shuffle v \rangle, \forall w,v \in \mathcal{W}^d \big \},
\end{equation*} see \cite[Chapter 7.5]{friz2010multidimensional} for details.

For $\alpha \in (0,1)$ we denote by $C^{\alpha}([0,T],\R^d)$ the space of $\alpha$-Hölder continuous paths $X$, that is $X:[0,T] \rightarrow \R^d$ such that \begin{equation*}
    \Vert X \Vert_{\alpha;[0,T]} = \sup_{0 \leq s< t \leq T}\frac{\Vert X_t-X_s \Vert}{|t-s|^{\alpha}} < \infty,
\end{equation*} where $\Vert \cdot \Vert$ denotes the Euclidean norm on $\R^d$. Denote by $\Delta_{[0,T]}^2$ the standard simplex $\Delta_{[0,T]}^2:=  \{ (s,t) \in [0,T]^2: 0 \leq s \leq t \leq T  \}.$ For $L \in \mathbb{N}$ and any two-parameter function on the truncated tensor-algebra \begin{equation*}
\Delta_{[0,T]}^2 \ni (s,t) \mapsto \mathbf{X}_{s,t}= (1,\mathbf{X}_{s,t}^{(1)},\dots,\mathbf{X}_{s,t}^{(L)}) \in T^{\leq L}(\R^d), 
\end{equation*} we denote by $\vertiii{\cdot}_{(\alpha,L)}$ the norm given by \begin{equation*}
 \vertiii{\mathbf{X}}_{(\alpha,L)}:=\max_{1\leq l \leq L}\bigg (\sup_{0\leq s < t \leq T}\frac{\Vert \mathbf{X}_{s,t}^{(l)} \Vert}{|t-s|^{l\alpha }} \bigg )^{1/l}
\end{equation*} We denote by $\mathscr{C}^{\alpha}_g([0,T],\R^d)$ the space of geometric $\alpha-$Hölder rough paths $\mathbf{X}$ on $\R^d$, which is the $\vertiii{\cdot}_{(\alpha,L)}-$closure of $L$-step signatures of Lipschitz continuous paths $X:[0,T] \rightarrow \R^d$ for $L= \lfloor 1/\alpha \rfloor$. More precisely, for every $\mathbf{X}\in \mathscr{C}^{\alpha}_g$ there exists a sequence $(X^n)_{n\in \mathbb{N}}\subset \mathrm{Lip}([0,T],\R^d)$ such that $\vertiii{\mathbf{X}^n-\mathbf{X}}_{(\alpha,L)} \rightarrow 0$ as $n\to \infty$, where $\mathbf{X}^n$ is the $L$-step signature of $X^n$, that is $$\mathbf{X}_{s,t}^n:= \bigg (\int_{s<t_1<\dots <t_l<t}\otimes dX^n_{t_1} \dots \otimes dX^n_{t_l}: 0 \leq l \leq L \bigg) \in G^{\leq L}(\R^d),$$ where the integrals are defined in a Riemann-Stieljes sense. For the rest of this paper, we will always fix $L=\lfloor 1/\alpha \rfloor$ and use the shorter notation $\vertiii{\cdot}_{\alpha}:= \vertiii{\cdot}_{(\alpha,\lfloor 1/\alpha \rfloor)}$. For any $\mathbf{X}\in \mathscr{C}^{\alpha}_g$ we denote by $\mathbf{X}^{<\infty}$ the \emph{rough-path signature}, which is the unique (up to tree-like equivalence $\sim_t$, see Boedihardjo et al. \cite{boedihardjo2016signature} for details) path from Lyons' extension theorem \cite[Theorem 3.7]{lyons1998differential}, that is \begin{equation*}
\Delta_{[s,t]}^2:[0,T] \ni (s,t) \mapsto \mathbf{X}^{<\infty}_{s,t}= (1,\mathbf{X}_{s,t}^{(1)},\dots, \mathbf{X}_{s,t}^{(L)},\mathbf{X}_{s,t}^{(L+1)},\dots) \in G(\R^d),
\end{equation*} such that $$\Vert \mathbf{X}^{(k)}\Vert_{k\alpha} < \infty \quad \forall k \geq 0, \quad \mathbf{X}^{<\infty}_{s,t} =\mathbf{X}^{<\infty}_{s,u} \star \mathbf{X}^{<\infty}_{u,t} \quad s\leq u \leq t,$$ where the latter is called \textit{Chen's} relation. Finally, by considering time-augmented paths $\widehat{X}_t = (t,X_t)$, and their geometric rough-path lifts $\widehat{\mathbf{X}}$, the signature maps becomes unique due to the strictly monotone time component. We denote by $\widehat{\mathscr{C}}^{\alpha}_g([0,T],\mathbb{R}^{d+1})$ the space of geometric $\alpha$-Hölder rough-path lifts of $\widehat{X}_t = (t,X_t)$, $X\in {C}^{\alpha}([0,T],\R^d)$. We will often use the shorter notation $\widehat{\mathscr{C}}^{\alpha}_g$ when it is clear from the context that we are working on the fixed time-interval $[0,T]$.

\section{Global approximation with rough-path signatures}\label{sec:globalapproxi}
In this section we present the theoretical foundation of this paper, which consists of a global approximation result based on \emph{robust} rough-path signatures. 

\subsection{The space of stopped rough paths}

For $\alpha \in (0,1)$, we consider an $\alpha$-Hölder continuous path $X:[0,T] \rightarrow \mathbb{R}^d$ starting at $X_0=x_0 \in \mathbb{R}^d$, and denote by $\geoRP{X} $ a geometric rough-path lift of the time-augmentation $(t,X_t)$, that is $\geoRP{X} \in \widehat{\mathscr{C}}^{\alpha}_g([0,T],\R^{d+1})$. 

\begin{definition}\label{def:stoppedRP} For any $\alpha \in (0,1)$ and $T>0$, the space of stopped $\widehat{\mathscr{C}}^{\alpha}_g-$paths is defined by the disjoint union 
    $$\Lambda^{\alpha}_T:= \bigcup_{t\in [0,T]}\widehat{\mathscr{C}}^{\alpha}_g([0,t],\R^{d+1}).$$ Moreover, we equip the space $\Lambda_T^{\alpha}$ with the final topology 
    induced by the map $$\phi:[0,T] \times \widehat{\mathscr{C}}^{\alpha}_g([0,T],\R^{d+1}) \longrightarrow \Lambda_T^{\alpha},\quad \phi(t,\mathbf{x}) = \mathbf{x}|_{[0,t]}.$$
\end{definition} 
The reason to work on this space is the following: If $X$ is a stochastic process, and $\mathbf{X}$ denotes the random geometric lift of $(t,X_t)$, we define $\mathcal{F}_t^{\mathbf{X}}=\sigma(\mathbf{X}_{0,s}:s\leq t)$ for $0\leq t \leq T$, i.e. the natural filtration generated by $\mathbf{X}$. In Lemma \ref{progressivLemma} below, we show that any $\mathbb{F}^{\mathbf{X}}-$progressive process $(A_t)_{t\in [0,T]}$ can be expressed as $A_t=f(\mathbf{X}|_{[0,t]})$, where $f$ is a measurable function on $\Lambda_T^{\alpha}$. Thus, progressively measurable processes can be thought of measurable functionals on $\Lambda_T^{\alpha}$, and we will discuss approximation results for the latter below. Similar spaces have already been considered in relation with \textit{functional Itô calculus} in Cont and Fournié \cite{cont2013functional} and Dupire \cite{dupire2019functional}, and for $p$-rough paths in Kalsi et al. \cite{kalsi2020optimal}, and more recently in relation with optimal stopping in Bayer et al. \cite{bayer2021optimal}.\begin{remark}\label{rmk:polish} One can also introduce a metric $d_{\Lambda}$ on the space $\Lambda^{\alpha}_T$. Let $\mathbf{y}\in \Lambda_T^{\alpha}$, that is there exists $0\leq s \leq T$ such that $\mathbf{y}=\mathbf{y}|_{[0,s]}\in \widehat{\mathscr{C}}^{\alpha}_g([0,s])$. Now for any $t\geq s$, we can extend $\mathbf{y}|_{[0,s]}$ to a geometric rough path $\tilde{\mathbf{y}}|_{[0,t]}\in \widehat{\mathscr{C}}^{\alpha}_g([0,t])$ as follows: By geometricity, there exists a sequence of smooth path $(u,y^n_u)_{u\in [0,s]}$, such that the (canonical) rough-path lifts converge to $\mathbf{y}|_{[0,s]}$. Then, we define $\tilde{\mathbf{y}}|_{[0,t]}$ to be the rough-path limit of the canonical lift of $u \mapsto (u,y^n_{u \land s})$ for $0\leq u \leq t$. This construction can be used to define \begin{equation*}
    d_{\Lambda}(\mathbf{x}|_{[0,t]},\mathbf{y}|_{[0,s]}) := \vertiii{\mathbf{x}-\tilde{\mathbf{y}}} _{\alpha:[0,t]}+ |t-s|, \quad s \leq t.
\end{equation*} It has been proved in \cite[Lemma A.1]{bayer2021optimal}, in the case of $p$-rough paths, the topology of the metric space $(\Lambda^{p}_T,d_{\Lambda})$ coincides with the final topology, and the space of stopped geometric rough paths is Polish. A similar argument can be done for the $\alpha$-Hölder case, by replacing the $p$-variation norm by the $\alpha$-Hölder norm and using the fact that $\widehat{\mathscr{C}}^{\alpha}_g$ is Polish, see Friz and Victoir \cite[Proposition 8.27]{friz2010multidimensional}.
\end{remark} 

\begin{remark}\label{rmk:geometricRV}
    Let $\mathbf{X}$ be a stochastic process on $\widehat{\mathscr{C}}^{\alpha}_g([0,T],\mathbb{R}^{d+1})$. It is discussed in \cite[Appendix A.1]{friz2010multidimensional} that $\mathbf{X}$ can be regarded as a random variable in $\widehat{\mathscr{C}}^{\alpha}_g([0,T],\mathbb{R}^{d+1})$, and its law $\mu_{\mathbf{X}}$ is a Borel measure on the Borel $\sigma$-algebra $\mathcal{B}^{\alpha}$ with respect to $\vertiii{\cdot}_{\alpha}$. Moreover, we define the product measure $d\mu := dt \otimes d\mu_{\mathbf{X}}$. For the surjective map $\phi: [0,T] \times \widehat{\mathscr{C}}_g^{\alpha}([0,T],\mathbb{R}^{d+1}) \rightarrow \Lambda_T^{\alpha}$ defined above, we can define the pushforward measure $\widehat{\mu}$ on $\Lambda^{\alpha}_T$, in symbols $\widehat{\mu} := \phi \# \mu$, which is given by \begin{equation*}
    \widehat{\mu}(A):= \mu \big (\phi^{-1}(A)\big ) \text{ for all } A\in \mathcal{B}(\Lambda^{\alpha}_T).
\end{equation*}
\end{remark}  Consider the space $\mathbb{H}^2$ of $\mathbb{F}^{\mathbf{X}}-$progressive processes $A$, such that \begin{equation}\label{def:prognorm}
    \Vert A\Vert_{\mathbb{H}^2}^2:=E\bigg [\int_0^TA_s^2ds\bigg ]<\infty.
\end{equation} The following result justifies the consideration of the space $\Lambda_T^{\alpha}$.

\begin{lemma}\label{progressivLemma}
     For any process $A\in \mathbb{H}^2$ and $ \alpha \in (0,1)$, there exists a measurable function $f:(\Lambda^{\alpha}_T,\mathcal{B}(\Lambda^{\alpha}_T)) \longrightarrow (\mathbb{R},\mathcal{B}(\mathbb{R}))$, such that $A_t = f(\mathbf{X}|_{[0,t]})$ almost everywhere.
\end{lemma}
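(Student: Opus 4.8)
The plan is to express $A$ as a Borel function of the map $\Phi\colon[0,T]\times\Omega\to\Lambda^{\alpha}_T$, $\Phi(t,\omega):=\phi\big(t,\mathbf{X}(\omega)\big)=\mathbf{X}(\omega)|_{[0,t]}$, after a harmless modification of $A$ on a $dt\otimes d\mathbb{P}$-null set. Two auxiliary facts about $\Lambda^{\alpha}_T$ will be needed and should be recorded first: the lifetime map $\ell\colon\Lambda^{\alpha}_T\to[0,T]$, $\mathbf{x}|_{[0,t]}\mapsto t$, and, for each fixed $r\in[0,T]$, the partial-restriction map $\mathrm{res}_{r}\colon\{\ell\ge r\}\to\widehat{\mathscr{C}}^{\alpha}_g([0,r],\mathbb{R}^{d+1})$, $\mathbf{x}|_{[0,t]}\mapsto\mathbf{x}|_{[0,r]}$, are Borel with respect to $d_{\Lambda}$ (this is the only genuinely topological input, discussed at the end). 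By Remark~\ref{rmk:polish} the target space is Polish, and by Remark~\ref{rmk:geometricRV} the rough path $\mathbf{X}$ is a measurable $\widehat{\mathscr{C}}^{\alpha}_g([0,T],\mathbb{R}^{d+1})$-valued random variable, so $\Phi$ is measurable.

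Next I would pass to a convenient representative of $A$. Since $A\in\mathbb{H}^{2}$ is progressively measurable, the mollified processes $A^{h}_t:=\tfrac1h\int_{(t-h)\vee0}^{t}A_s\,ds$ are continuous and $(\mathcal{F}^{\mathbf{X}}_t)$-adapted (by progressive measurability and Fubini), hence predictable, and $A^{h}\to A$ in $L^{2}(dt\otimes d\mathbb{P})$ as $h\downarrow0$; therefore $A$ agrees $dt\otimes d\mathbb{P}$-a.e. with a predictable process, which I may and do assume $A$ to be. Predictable processes in $L^{2}(dt\otimes d\mathbb{P})$ are $L^{2}$-limits of bounded simple predictable processes $A^{(n)}_t=\sum_{k}\xi^{(n)}_{t^{n}_{k}}\mathbf{1}_{(t^{n}_{k},t^{n}_{k+1}]}(t)$ along partitions $0=t^{n}_{0}<\dots<t^{n}_{m_n}=T$, with each $\xi^{(n)}_{t^{n}_{k}}$ bounded and $\mathcal{F}^{\mathbf{X}}_{t^{n}_{k}}$-measurable. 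Because $\mathcal{F}^{\mathbf{X}}_{r}=\sigma(\mathbf{X}_{0,s}:s\le r)=\sigma(\mathbf{X}|_{[0,r]})$ — the stopped rough path $\mathbf{X}|_{[0,r]}$ and the family of its increments $(\mathbf{X}_{0,s})_{s\le r}$ determine one another measurably, via Chen's relation and continuity — the Doob--Dynkin lemma furnishes Borel functions $G^{(n)}_{k}$ on $\widehat{\mathscr{C}}^{\alpha}_g([0,t^{n}_{k}],\mathbb{R}^{d+1})$ with $\xi^{(n)}_{t^{n}_{k}}=G^{(n)}_{k}(\mathbf{X}|_{[0,t^{n}_{k}]})$.

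It remains to assemble these. For each $n$ put $f_{n}(\mathbf{w}):=\sum_{k}G^{(n)}_{k}\!\big(\mathrm{res}_{t^{n}_{k}}(\mathbf{w})\big)\,\mathbf{1}_{(t^{n}_{k},t^{n}_{k+1}]}(\ell(\mathbf{w}))$; by the first paragraph $f_{n}$ is a Borel function on $\Lambda^{\alpha}_T$, and by construction $A^{(n)}_t=f_{n}(\mathbf{X}|_{[0,t]})=f_{n}(\Phi(t,\cdot))$ for every $t\in(0,T]$. Since $A^{(n)}\to A$ in $L^{2}(dt\otimes d\mathbb{P})$, passing to a subsequence gives $f_{n_j}\circ\Phi\to A$ $dt\otimes d\mathbb{P}$-a.e., so $f:=\limsup_{j}f_{n_j}$ is Borel on $\Lambda^{\alpha}_T$ and satisfies $A_t=f(\mathbf{X}|_{[0,t]})$ $dt\otimes d\mathbb{P}$-a.e., which is the claim. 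The step I expect to require the most care is the auxiliary topological claim of the first paragraph: the metric $d_{\Lambda}$ compares stopped rough paths of different lifetimes through the frozen extension $\widetilde{\mathbf{y}}$, so one must verify carefully that $d_{\Lambda}(\mathbf{w}_n,\mathbf{w})\to0$ with $\ell(\mathbf{w})>r$ forces $\ell(\mathbf{w}_n)\ge r$ eventually and $\mathrm{res}_{r}(\mathbf{w}_n)\to\mathrm{res}_{r}(\mathbf{w})$ in $\widehat{\mathscr{C}}^{\alpha}_g([0,r],\mathbb{R}^{d+1})$; the remaining ingredients (density of simple predictable processes, mollification, Doob--Dynkin) are standard.
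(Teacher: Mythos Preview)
Your argument is correct and follows the same overall pattern as the paper's: approximate $A$ by simple processes, represent the coefficients via Doob--Dynkin, assemble, and take a $\limsup$. The one genuine difference is where the measurability bookkeeping happens. You build each $f_n$ directly on $\Lambda^{\alpha}_T$, which forces you to introduce and verify Borel measurability of the lifetime map $\ell$ and the restriction maps $\mathrm{res}_r$ with respect to $d_\Lambda$; as you note, this is the delicate step. The paper sidesteps this entirely by working first on the product $[0,T]\times\widehat{\mathscr{C}}^{\alpha}_g([0,T],\mathbb{R}^{d+1})$, where the assembled function $F^n(t,\mathbf{x})=\sum_j\mathbf{1}_{(t^n_j,t^n_{j+1}]}(t)\,G^n_j(\mathbf{x}|_{[0,t^n_j]})$ is obviously $\mathcal{B}([0,T])\otimes\mathcal{F}^{\mathbf{X}}_T$-measurable, taking $F=\limsup_n F^n$ there, and only at the very end passing to $\Lambda^{\alpha}_T$ via the \emph{continuous} frozen-extension map $\Xi\colon\Lambda^{\alpha}_T\to[0,T]\times\widehat{\mathscr{C}}^{\alpha}_g$, $\mathbf{x}|_{[0,t]}\mapsto(t,\tilde{\mathbf{x}})$, setting $f=F\circ\Xi$. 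This buys the paper a cleaner route: continuity of a single map replaces your two auxiliary Borel facts. Your preliminary mollification to a predictable representative is also unnecessary, since the density of elementary progressive processes in $\mathbb{H}^2$ (e.g.\ \cite[Lemma~3.2.4]{karatzas1991brownian}) already gives the needed simple approximants without passing through predictability.
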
 

\begin{proof}
    Consider the space of elementary, $\mathbb{F}^{\mathbf{X}}-$progressive processes, that is processes of the form 
    \begin{equation}
        A^n_t(\omega):= \xi^n_0(\omega)1_{\{0\}}(t) + \sum_{j=1}^{m_n-1}1_{(t_j^n,t_{j+1}^n]}(t)\xi^n_{t_{j}}(\omega),\label{elementary}
    \end{equation}
    where $0\leq t^n_0 < \dots < t^n_{m_n}\leq T$, and $\xi^n_{t_j}$ is a $(\mathcal{F}^{\mathbf{X}}_{t_j})-$measurable, square integrable random variable. A standard result for the construction of stochastic integrals, shows that this space is dense in $\mathbb{H}^2$, this can be found in Karatzas and Shreve \cite[Lemma 3.2.4]{karatzas1991brownian} for instance. Thus, we can find $A^n$ of the form \eqref{elementary}, such that $A^n \longrightarrow A$ for almost every $(t,\omega)$. Since the random variable $\xi_{t_j}^n$ is measurable with respect to the $\sigma$-algebra $\mathcal{F}^{\mathbf{X}}_{t_j}:=\sigma(\mathbf{X}_{0,s}:s\leq t_j)=\sigma(\mathbf{X}|_{[0,t_j]})$, by the Doob-Dynkin Lemma there exists a Borel measurable function $F^n_j:\widehat{\mathscr{C}}^{\alpha}_g([0,t_j],\mathbb{R}^{d+1}) \longrightarrow \mathbb{R}$ such that $\xi^n_{t_j}(\omega)= F_j^n(\mathbf{X}|_{[0,t_j]}(\omega))$. Then the functions 
    \begin{equation*}
        [0,T]\times \widehat{\mathscr{C}}^{\alpha}_g \ni (t,\mathbf{x}) \mapsto 1_{(t_j^n,t_{j+1}^n]}(t)F_j^n(\mathbf{x}|_{[0,t_j]})
    \end{equation*} are $ (\mathcal{B}([0,T])\otimes \mathcal{F}^{\mathbf{X}}_T )-$measurable, and therefore also the function \begin{equation*}
    F^n(t,\mathbf{x}):=F_0^n(\mathbf{x}_0)1_{\{0\}}(t) + \sum_{j=1}^{m_n-1}1_{(t_j^n,t_{j+1}^n]}(t)F_j^n(\mathbf{x}|_{[0,t_j]}).
    \end{equation*} Finally, define the jointly measurable function $F(t,\mathbf{x}):=\limsup_{n\to \infty}F^n(t,\mathbf{x})$, and notice that for almost every $(t,\omega)$, we have 
    \begin{align*}
        F(t,\mathbf{X}(\omega))& =\limsup_{n\to \infty}\bigg(F_0^n\big (\mathbf{X}_0(\omega)\big )1_{\{0\}}(t) + \sum_{j=1}^{m_n-1}1_{(t_j^n,t_{j+1}^n]}(t)F_j^n\big (\mathbf{X}|_{[0,t_j]}(\omega)\big )\bigg) \\ & = \limsup_{n\to \infty}\bigg(\xi^n_0(\omega)1_{\{0\}}(t) + \sum_{j=1}^{m_n-1}1_{(t_j^n,t_{j+1}^n]}(t)\xi^n_{t_{j}}(\omega)\bigg) = A_t(\omega).
    \end{align*} 
    Next, for any element $\mathbf{x}|_{[0,t]}\in \Lambda^{\alpha}_T$, we let $(t,\tilde{\mathbf{x}})\in [0,T]\times \widehat{\mathscr{C}}^{\alpha}_g([0,T],\mathbb{R}^{d+1})$, where $\tilde{\mathbf{x}}$ is the geometric rough-path lift of $[0,T] \ni u \mapsto (u,x_{u \land t})$, see Remark \ref{rmk:polish}. The map $\Xi:\Lambda^{\alpha}_T \longrightarrow [0,T]\times \widehat{\mathscr{C}}^{\alpha}_g$ with $\Xi(\mathbf{x}|_{[0,t]}):=(t,\tilde{\mathbf{x}})$ is continuous and thus especially measurable. Define the composition $f:= F \circ \Xi$, which is a measurable map $f:\Lambda^{\alpha}_T \longrightarrow \mathbb{R}$, such that 
    \begin{equation*}
        f\big(\mathbf{X}|_{[0,t]}(\omega)\big)= A_t(\omega),
    \end{equation*} 
    for almost every $(t,\omega)$, which is exactly what was claimed.
\end{proof} 

\subsection{A Stone--Weierstrass result for robust signatures}\label{sec:SW}

The goal of this section is to present a Stone-Weierstrass type of result for continuous functionals $f:\Lambda_T^{\alpha} \longrightarrow \R$, which will be the key ingredient for the main result in Section \ref{sec:L2approxsection}. To this end, consider the set of linear functionals of the signature 
\begin{equation*}
     L_{\tmop{sig}}(\Lambda^{\alpha}_T) = \{ \Lambda^{\alpha}_T \ni \mathbf{X}|_{[0,t]} \mapsto \langle \Sig{X}_{0,t},\ell \rangle: \ell\in \mathcal{W}^{d+1}  \}\subseteq C(\Lambda_T^{\alpha},\R),
\end{equation*} where we recall that $\mathcal{W}^{d+1}$ denotes the linear span of words, see Section \ref{presectionRP}. Note that continuity of $\mathbf{X}|_{[0,t]} \mapsto \langle \Sig{X}_{0,t},\ell \rangle$ follows from continuity of the \emph{truncated} signature, i.e., $\mathbf{X}|_{[0,t]} \mapsto \mathbf{X}^{\le K}_{0,t}$ -- a consequence of Lyons' extension theorem, see Lyons \cite[Theorem 3.7]{lyons1998differential} -- for $K$ large enough, as any $\ell \in \mathcal{W}^{d+1}$ only has finitely many non-zero coefficients.
A similar set was considered in Kalsi et al. \cite[Definition 3.3]{kalsi2020optimal} with respect to $p$-rough paths, and the authors prove that \textit{restricted} to a compact set $\mathcal{K}$ on the space of time-augemented rough paths, the set $L_{\tmop{sig}}$ is dense in $C (\mathcal{K},
   \mathbb{R})$. In words, restricted to compacts, continuous functionals on the path space $\Lambda_T^{\alpha}$ can be approximated by linear functionals of the signature. However, since such path spaces are not even locally compact, it is desirable to drop the need of a compact set $\mathcal{K}$.
   
   An elegant way to circumvent the requirement of a compact set, is to consider so-called \textit{robust signatures}, introduced in Chevyrev and Oberhauser \cite{chevyrev2018signature}. Loosely speaking, the authors construct a so-called \textit{tensor-normalization} $\lambda$, see \cite[Proposition 14 and Example 4]{chevyrev2018signature}, on the state-space of the signature $T((\R^{d+1}))$, which is a continuous and injective map $$\lambda: T((\R^{d+1}))\longrightarrow \big \{ \mathbf{a}\in T((\R^{d+1})):\Vert \mathbf{a} \Vert \leq R \big \}, \quad R > 0,$$ and they call $\lambda(\Sig{X})$ the robust signature. The continuity of the latter is with respect to the Banach-space norm $\Vert \mathbf{a} \Vert = (\sum_{k\geq 0} \Vert \mathbf{a}^{(k)}\Vert^2_{(\mathbb{R}^{d+1})^{\otimes k}} )^{1/2}$ for  $\mathbf{a}\in T((\mathbb{R}^{d+1}))$ , see \cite[Definition 10]{chevyrev2018signature}. This motivates to define the set \begin{equation*}
     L^{\lambda}_{\tmop{sig}}(\Lambda^{\alpha}_T) =  \big \{ \Lambda^{\alpha}_T \ni \mathbf{X}|_{[0,t]} \mapsto \langle \lambda(\Sig{X}_{0,t}),\ell \rangle: \ell\in \mathcal{W}^{d+1} \big \}\subseteq C_b(\Lambda_T^{\alpha},\R).
\end{equation*} A general version of the Stone-Weierstrass result given in \cite{giles1971generalization}, leads to the following result, which was stated already in Giles \cite[Theorem 26]{chevyrev2018signature}, and we present the proof here for completeness. 
\begin{lemma}
  \label{density} Let $\alpha \in (0,1) $. Then the set $L_{\tmop{sig}}^{\lambda}(\Lambda^{\alpha}_T)$ is dense in $C_b
  (\Lambda^{\alpha}_T, \mathbb{R})$ with respect to the strict topology. More precisely, for any $f \in C_b
  (\Lambda^{\alpha}_T, \mathbb{R})$ we can find a sequence $(f_n)_{n\in \N} \subseteq L_{\tmop{sig}}^{\lambda}(\Lambda^{\alpha}_T)$, such that 
  $$
  \Vert f-f_n \Vert_{\infty,\psi}:= \sup_{\mathbf{x}\in \Lambda_T^{\alpha}} \big |\psi(\mathbf{x})(f(\mathbf{x})-f_n(\mathbf{x}))\big | \rightarrow 0, \text{ as } n\to \infty, \quad \forall \psi \in B_0(\Lambda_T^{\alpha}),
  $$ 
  where $B_0(\Lambda_T^{\alpha})$ denotes the set of functions $\psi:\Lambda_T^{\alpha} \rightarrow \mathbb{R}$, such that for all $\epsilon > 0$ there exists a compact set $K\subseteq \Lambda_T^{\alpha}$ with $\sup_{\mathbf{x}\in \Lambda_T^{\alpha}\setminus{K}}|\psi(\mathbf{x})|< \epsilon$.
\end{lemma}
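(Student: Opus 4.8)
The plan is to obtain the statement as an instance of the generalized Stone--Weierstrass theorem for the strict topology of \cite{giles1971generalization}. Denote by $\beta_0$ the locally convex topology on $C_b(\Lambda^{\alpha}_T,\R)$ generated by the seminorms $\Vert\cdot\Vert_{\infty,\psi}$, $\psi\in B_0(\Lambda^{\alpha}_T)$; the assertion is that $L^{\lambda}_{\tmop{sig}}(\Lambda^{\alpha}_T)$ is $\beta_0$-dense in $C_b(\Lambda^{\alpha}_T,\R)$, from which the displayed approximating sequence is read off. By Remark~\ref{rmk:polish} the space $\Lambda^{\alpha}_T$ is Polish, hence completely regular and Hausdorff, so the hypotheses of \cite{giles1971generalization} are satisfied as soon as we know that $L^{\lambda}_{\tmop{sig}}(\Lambda^{\alpha}_T)$ is a subalgebra of $C_b(\Lambda^{\alpha}_T,\R)$ which contains the constant functions and separates points; these three points I would then verify in turn. (Continuity and boundedness of $\mathbf{X}|_{[0,t]}\mapsto\langle\lambda(\Sig{X}_{0,t}),l\rangle$, i.e.\ $L^{\lambda}_{\tmop{sig}}\subseteq C_b$, follow from continuity of the signature map, continuity of $\lambda$, and the uniform bound $\Vert\lambda(\cdot)\Vert\le R$, and are already recorded in the definition of $L^{\lambda}_{\tmop{sig}}$.)

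The only step requiring genuine care is the \emph{algebra} property, because the tensor normalization $\lambda$ a priori destroys the shuffle identity enjoyed by the bare signature. Here I would use that, by the construction of a tensor normalization in \cite[Proposition 14 and Example 4]{chevyrev2018signature}, $\lambda$ acts on $T((\R^{d+1}))$ as a dilation $\mathbf{a}\mapsto\delta_{c(\mathbf{a})}\mathbf{a}$ by a scalar $c(\mathbf{a})$, so that $\langle\lambda(\Sig{X}_{0,t}),w\rangle=c(\Sig{X}_{0,t})^{\mathrm{deg}(w)}\langle\Sig{X}_{0,t},w\rangle$ for every word $w$. Since $\Sig{X}_{0,t}\in G(\R^{d+1})$ is group-like and every word occurring in $w\shuffle v$ has degree $\mathrm{deg}(w)+\mathrm{deg}(v)$, one gets $\langle\lambda(\Sig{X}_{0,t}),w\rangle\langle\lambda(\Sig{X}_{0,t}),v\rangle=\langle\lambda(\Sig{X}_{0,t}),w\shuffle v\rangle$, and extending bilinearly $\langle\lambda(\Sig{X}_{0,t}),l\rangle\langle\lambda(\Sig{X}_{0,t}),l'\rangle=\langle\lambda(\Sig{X}_{0,t}),l\shuffle l'\rangle$ with $l\shuffle l'\in\mathcal{W}^{d+1}$. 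Thus $L^{\lambda}_{\tmop{sig}}$ is closed under pointwise multiplication; it is visibly a linear subspace; and it contains the constants since $\langle\lambda(\Sig{X}_{0,t}),\emptyset\rangle=1$ (the degree-zero level is untouched by a dilation).

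Next I would verify point separation. Let $\mathbf{X}|_{[0,t]}\neq\mathbf{Y}|_{[0,s]}$ in $\Lambda^{\alpha}_T$. If $t\neq s$, the degree-one component of the signature already distinguishes them, since its time coordinate equals $t$, resp.\ $s$; hence $\Sig{X}_{0,t}\neq\Sig{Y}_{0,s}$. If $t=s$ but $\mathbf{X}\neq\mathbf{Y}$ in $\widehat{\mathscr{C}}^{\alpha}_g([0,t],\R^{d+1})$, then, as the strictly increasing time component makes the underlying path tree-reduced, the signature map is injective (cf.\ the discussion at the end of Section~\ref{presectionRP}), so again $\Sig{X}_{0,t}\neq\Sig{Y}_{0,s}$. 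In either case, injectivity of $\lambda$ yields $\lambda(\Sig{X}_{0,t})\neq\lambda(\Sig{Y}_{0,s})$, hence some word $w$ satisfies $\langle\lambda(\Sig{X}_{0,t}),w\rangle\neq\langle\lambda(\Sig{Y}_{0,s}),w\rangle$, and $L^{\lambda}_{\tmop{sig}}$ separates points. With the three hypotheses in place, \cite{giles1971generalization} gives $\beta_0$-density of $L^{\lambda}_{\tmop{sig}}$ in $C_b(\Lambda^{\alpha}_T,\R)$, and spelling this out in terms of the seminorms $\Vert\cdot\Vert_{\infty,\psi}$, $\psi\in B_0(\Lambda^{\alpha}_T)$, is the claimed approximation (as already noted in \cite[Theorem 26]{chevyrev2018signature}).

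The main obstacle, as indicated, is the algebra property: there is no reason for $\lambda(\Sig{X}_{0,t})$ to be group-like, so products of robust linear functionals could in principle leave the class. The point is that a scalar dilation is compatible with the \emph{degree-graded} shuffle product — this is precisely why a tensor normalization is built as a dilation in \cite{chevyrev2018signature} — and it is this homogeneity observation that makes everything go through; the remaining ingredients (complete regularity of $\Lambda^{\alpha}_T$, injectivity of $\lambda$, injectivity of the time-augmented signature) are either provided by Remark~\ref{rmk:polish} or recalled in Section~\ref{presectionRP}.
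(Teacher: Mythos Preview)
Your proposal is correct and follows essentially the same route as the paper's proof: both invoke the generalized Stone--Weierstrass theorem of \cite{giles1971generalization}, verify the subalgebra property via the key homogeneity observation that the tensor normalization acts as a degree-wise dilation (your $c(\Sig{X}_{0,t})^{\mathrm{deg}(w)}$ is the paper's $\Psi(\Sig{X}_{0,t})^{|w|}$), and then check point separation and the presence of the constant $1$ using injectivity of the time-augmented signature and of $\lambda$. Your write-up is slightly more explicit in checking the topological hypothesis on $\Lambda^{\alpha}_T$ via Remark~\ref{rmk:polish} and in splitting the point-separation argument into the cases $t\neq s$ and $t=s$, but the substance is the same.
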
 

\begin{proof} This result is a consequence of the general Stone-Weierstrass result proved in \cite{giles1971generalization}, see also \cite[Theorem 9]{chevyrev2018signature}. From the latter, we only need to check the inclusion $L_{\text{sig}}^{\lambda} \subseteq C_b(\Lambda_T^{\alpha},\mathbb{R})$ is a subalgebra, such that 
    \begin{enumerate}
    \item $L_{\text{sig}}^{\lambda}$ separates points, that is $\forall x \neq y$ there
    exists $f \in L_{\text{sig}}^{\lambda}$ such that $f (x) \neq f (y)$.
    \item $L_{\text{sig}}^{\lambda}$ contains non-vanishing functions, that is $\forall
    x$ there exists $f \in L_{\text{sig}}^{\lambda}$ such that $f (x) \neq 0$.
\end{enumerate}
  As mentioned in the beginning of Section \ref{sec:SW}, $\mathbf{X}|_{[0,t]}\mapsto \langle \mathbf{X}^{<\infty}_{0,t},\ell\rangle$ is continuous for any $\ell\in \mathcal{W}^{d+1}$, and since the tensor-normalization is continuous and bounded, it follows that $\mathbf{X}|_{[0,t]}\mapsto \langle \lambda(\mathbf{X}^{<\infty}_{0,t}),\ell\rangle$ is continous and bounded. To see that $L_{\tmop{sig}}^{\lambda} \subseteq C_b(\Lambda_T^{\alpha},\mathbb{R})$ is a subalgebra, we fix $\phi_1, \phi_2
  \in L_{\tmop{sig}}^{\lambda}$. By definition, there exists a linear combination of words $\ell_1, \ell_2 \in \mathcal{W}^{d+1}$ such that $\phi_i (\mathbf{X} |_{[0, s]}
  \nobracket) = \langle \lambda  (\mathbf{X}^{< \infty}_{0,s}),\ell_i
  \rangle$. We clearly have 
  \begin{equation*}
      \phi (\mathbf{X} |_{[0, s]} \nobracket) \assign \phi_1
     (\mathbf{X} |_{[0, s]} \nobracket) + \phi_2
     (\mathbf{X} |_{[0, s]} \nobracket) = \big \langle 
     \lambda  (\mathbf{X}^{< \infty}_{0,s}),\ell_1 + \ell_2 \big \rangle \in
     L_{\tmop{sig}}^{\lambda}. 
  \end{equation*} 
  Now assume $\ell_1,\ell_2$ are words $\ell_1=w$ and $\ell_2= v$. By definition of tensor-normalization \cite[Definition 12]{chevyrev2018signature}, for some positive function $\Psi: T((\R^{d+1})) \to ]0,+\infty[$, we have \begin{align*}
     \phi_1
     (\mathbf{X} |_{[0, s]} \nobracket) \cdot \phi_2
     (\mathbf{X} |_{[0, s]} \nobracket) & = \big \langle  \lambda  (\mathbf{X}^{< \infty}_{0,s}), w\big \rangle \big \langle  \lambda  (\mathbf{X}^{< \infty}_{0,s}), v \big \rangle \\ & = \Psi  (\mathbf{X}^{< \infty}_{0,s})^{|w|+|v|} \langle \mathbf{X}^{< \infty}_{0,s}, w\rangle\langle  \mathbf{X}^{< \infty}_{0,s}, v\rangle \\ & = \Psi  (\mathbf{X}^{< \infty}_{0,s})^{|w|+|v|}  \langle \mathbf{X}^{< \infty}_{0,s}, w \shuffle v\rangle,
  \end{align*} 
  where we used that $\mathbf{X}^{<\infty}\in G(\R^d)$ for the last equality. But by definition of the shuffle product \eqref{def:shuffle-product}, it follows that $w\shuffle v = \sum_j u_j$, where $u_j$ are words with $|u_j| = |w|+|v|$, and hence \begin{align*}
      \phi_1
     (\mathbf{X} |_{[0, s]} \nobracket) \cdot \phi_2
     (\mathbf{X} |_{[0, s]} \nobracket) & = \sum_j\Psi  (\mathbf{X}^{< \infty}_{0,s})^{|w|+|v|} \langle \mathbf{X}^{< \infty}_{0,s}, u_j\rangle = \big \langle  \lambda  (\mathbf{X}^{< \infty}_{0,s}), w \shuffle v \big \rangle,
  \end{align*} so that the product lies in $L_{\tmop{sig}}^{\lambda}$. The same reasoning can be extended to linear combination of words $\ell_1,\ell_2$, and thus the set $L_{\mathrm{sig}}^{\lambda}$ is indeed a subalgebra in $C_b(\Lambda^{\alpha}_T,\mathbb{R})$. Now let
  $\mathbf{X}, \mathbf{Y} \in \Lambda_T^{\alpha}$, such that
  $\mathbf{X} \neq \mathbf{Y}$. As remarked in Section \ref{presectionRP}, since we are working with rough-path lifts of time-augmented paths $(t,X_t)$, the signature map is injective. Moreover, by definition \cite[Definition 12]{chevyrev2018signature}, the map $\lambda$ is also injective, therefore $\lambda
  (\mathbf{X} ^{< \infty}) \neq \lambda (\mathbf{Y}
  ^{< \infty})$. This in particular implies that there exists $\ell\in \mathcal{W}^{d+1}$ such that $\langle\lambda
  (\mathbf{X} ^{< \infty}),\ell \rangle \neq \langle\lambda
  (\mathbf{Y} ^{< \infty}),\ell \rangle $. Defining $f \in L_{\tmop{sig}}^{\lambda}$ by $f(\mathbf{x})=\langle \lambda
  (\mathbf{x} ^{< \infty}),\ell \rangle$, it follows that $f(\mathbf{X}) \neq f(\mathbf{Y})$ and thus
  $L_{\tmop{sig}}^{\lambda}$ separates points. Finally, since $1 =
  \langle  \lambda  (\mathbf{X}^{< \infty} ), \emptyset \rangle \in
  L_{\tmop{sig}}^{\lambda}$, the claim follows.
\end{proof} 

\subsection{Approximation with robust signatures}\label{sec:L2approxsection}

We are now ready to state and proof the main result of this section. We will assume the following \begin{Assumption}\label{ass:measure}\begin{itemize}
        \item[(i)] $\alpha \in (0,1)$ is such that $q := \frac{1}{\alpha}\notin \mathbb{N}$.
        \item[(ii)] $\mu$ is a measure on the Borel space $(\widehat{\mathscr{C}}^{\alpha}_g,\mathcal{B}(\widehat{\mathscr{C}}^{\alpha}_g))$ such that \begin{equation*} 
    \mu ( \widehat{\mathscr{C}}^{\alpha}_g )< \infty\text{ and }  \mu  ( \widehat{\mathscr{C}}^{\alpha}_g\setminus \widehat{\mathscr{C}}^{\beta}_g )=0, \quad \forall \alpha < \beta  < \frac{1}{\lfloor q \rfloor }.\end{equation*}
    \end{itemize}

\end{Assumption} Before stating the main result of this section, we give an example to clarify the role of Assumption \ref{ass:measure}.
\begin{example}
    Let $X$ be a $d$-dimensional Brownian motion and fix $\alpha \in (1/3,1/2)$, that is (i) in Assumption \ref{ass:measure} holds with $2<q<3$. Denote by $\mathbf{X}$ the geometric $\alpha-$Hölder rough-path lift of the time-augmentation $(t,X_t)$, that is, define the second level using Stratonovich integration, see Friz and Hairer \cite[Proposition 3.5]{friz2020course}. As discussed in Remark \ref{rmk:geometricRV}, we can see $\mathbf{X}$ as $\widehat{\mathscr{C}}_{g}^{\alpha}$-valued random variable with law $\mu_{\mathbf{X}}$. Applying a Kolmogorov type of result for rough paths, see for instance \cite[Theorem 3.1]{friz2020course}, it follows that $\mu_{\mathbf{X}}$ assigns full measure to $\widehat{\mathscr{C}}_{g}^{\beta}$ for all $\beta \in (\alpha,1/2)$, and thus Assumption \ref{ass:measure} (ii) holds. Similarly we can treat geometric lifts of more general processes such as semimartingales, but also fractional Brownian motion by replacing $1/2$ by the Hurst parameter $H$.
\end{example}
 The following theorem shows that under Assumption \ref{ass:measure}, we can approximate any functional in $L^p(\Lambda_T^{\alpha},\widehat{\mu})$, where we recall from Remark \ref{rmk:geometricRV} that  $\widehat{\mu}$ is the pushforward measure of $dt \otimes d\mu$ on $\Lambda_T^{\alpha}$, by linear functionals of the robust signature with respect to the $L^p$-norm.

\begin{theorem}\label{mainresultappro}
    Let $\alpha \in (0,1)$ and $\mu$ be a measure on $(\widehat{\mathscr{C}}^{\alpha}_g,\mathcal{B}(\widehat{\mathscr{C}}^{\alpha}_g))$ such that Assumption \ref{ass:measure} holds true. 
    Then for all $f\in L^{p}(\Lambda_T^{\alpha},\widehat{\mu})$, $1 \leq p < \infty$, there exists a sequence $(f_n)_{n\in \mathbb{N}}\subset L_{\text{sig}}^{\lambda}(\Lambda_T^{\alpha})$ such that $\Vert f-f_n \Vert_{L^p}\rightarrow 0$ as $n\to \infty$.
\end{theorem}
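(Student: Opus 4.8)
The plan is to deduce Theorem~\ref{mainresultappro} from the Stone--Weierstrass result Lemma~\ref{density} by a standard density-in-$L^p$ argument, with the only genuine subtlety being the passage from the strict topology (uniform convergence against weights $\psi \in B_0$) to convergence in $L^p(\mu)$. First I would observe that, since $\mu$ is a finite measure by \eqref{measurecondition}, bounded continuous functions $C_b(\Lambda_T^\alpha,\R)$ are dense in $L^p(\Lambda_T^\alpha,\mu)$: this is the usual measure-theoretic fact (simple functions are dense, indicators of Borel sets are approximated in $L^p$ by indicators of closed/open sets via outer regularity of Borel measures on a Polish space — recall from Remark~\ref{rmk:polish} that $\Lambda_T^\alpha$ is Polish — and then by bounded continuous functions via Urysohn). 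Hence it suffices to approximate an arbitrary $f \in C_b(\Lambda_T^\alpha,\R)$ in $L^p(\mu)$ by elements of $L_{\text{sig}}^\lambda(\Lambda_T^\alpha)$.

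Next I would construct an appropriate weight $\psi \in B_0(\Lambda_T^\alpha)$ to exploit Lemma~\ref{density}. The key point is that \eqref{measurecondition} says $\mu$ is concentrated on $\bigcap_{\beta \in (\alpha,\gamma)} \Lambda_T^\beta$, and the inclusions $\Lambda_T^\beta \hookrightarrow \Lambda_T^\alpha$ for $\beta > \alpha$ are, in the relevant sense, ``compact-like'': bounded sets in the stronger $\beta$-Hölder norm are relatively compact in the $\alpha$-Hölder topology (the standard Arzelà--Ascoli-type interpolation argument for Hölder rough paths, cf.\ \cite[Proposition 8.27 / Corollary 8....]{friz2010multidimensional} combined with the time-parameter living in the compact $[0,T]$). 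Concretely, fix a sequence $\beta_k \downarrow \alpha$ and set $K_k := \{\mathbf{x}|_{[0,t]} \in \Lambda_T^\alpha : \vertiii{\mathbf{x}}_{(\beta_k,N)} \le R_k\}$ for radii $R_k \uparrow \infty$ chosen so that $\mu(\Lambda_T^\alpha \setminus K_k) \to 0$; each $K_k$ is compact in $\Lambda_T^\alpha$ and $\mu\big(\Lambda_T^\alpha \setminus \bigcup_k K_k\big) = 0$ by \eqref{measurecondition}. Then build $\psi$ as a function that is $\ge 1$ on, say, $K_1$ and decays to $0$ off the $K_k$'s slowly enough that $\psi \in B_0$ (i.e.\ $\psi$ small outside each compact), yet $1/\psi \in L^p(\mu)$ — this is arrangeable because $\mu$ is finite and escapes the $K_k$'s arbitrarily fast, so $\psi$ need only decay at a controlled rate; alternatively one works with a fixed large compact $K$ capturing all but $\varepsilon$ of the mass and handles the tail crudely using boundedness of $f$.

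With such a $\psi$ in hand, Lemma~\ref{density} gives $(f_n) \subset L_{\text{sig}}^\lambda$ with $\sup_{\mathbf{x}} |\psi(\mathbf{x})(f(\mathbf{x}) - f_n(\mathbf{x}))| \to 0$. Then I would estimate, using $|f - f_n| = |\psi(f-f_n)|/\psi$ and Hölder's (or just a direct) bound,
\begin{equation*}
\Vert f - f_n \Vert_{L^p(\mu)}^p = \int_{\Lambda_T^\alpha} \frac{|\psi(\mathbf{x})(f(\mathbf{x})-f_n(\mathbf{x}))|^p}{\psi(\mathbf{x})^p}\, d\mu(\mathbf{x}) \le \Vert \psi(f-f_n)\Vert_\infty^p \int_{\Lambda_T^\alpha} \psi(\mathbf{x})^{-p}\, d\mu(\mathbf{x}),
\end{equation*}
and the right-hand side tends to $0$ provided $\int \psi^{-p}\, d\mu < \infty$, which is exactly what the construction of $\psi$ secures. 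Care is needed that the $f_n$ obtained from Lemma~\ref{density} are bounded (they are, being continuous bounded functions by membership in $C_b$), so $f - f_n \in L^p(\mu)$ throughout and no integrability issue is swept under the rug.

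The main obstacle — and the only part requiring real work rather than bookkeeping — is the construction of the weight $\psi \in B_0(\Lambda_T^\alpha)$ with $1/\psi \in L^p(\mu)$, which is precisely where hypothesis \eqref{measurecondition} is used in an essential way: without the concentration of $\mu$ on the ``smoother'' subspaces $\Lambda_T^\beta$ one could not find a function vanishing at infinity whose reciprocal is still $\mu$-integrable, and indeed the theorem would be false since $\Lambda_T^\alpha$ is not locally compact. Establishing the compactness of the sets $K_k$ in the final topology on $\Lambda_T^\alpha$ (equivalently, in the metric $d_\Lambda$ of Remark~\ref{rmk:polish}) via the Hölder interpolation estimate, and verifying that the $R_k$ can be tuned against the decay rate of $\mu$ off these sets, is the technical heart; everything else is the routine ``$C_b$ dense in $L^p$'' sandwich plus an application of Lemma~\ref{density}.
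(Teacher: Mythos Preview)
Your proposal is correct and follows essentially the same route as the paper: reduce from $L^p$ to $C_b$, construct a weight $\psi \in B_0(\Lambda_T^\alpha)$ with $\psi^{-p} \in L^1(\mu)$ via the compact embedding of $\beta$-H\"older balls into $\Lambda_T^\alpha$ (which is precisely where \eqref{measurecondition} enters), and then apply Lemma~\ref{density} together with the bound $\|f-f_n\|_{L^p}^p \le \|\psi(f-f_n)\|_\infty^p \int \psi^{-p}\,d\mu$. The paper's packaging differs only cosmetically: it reaches $C_b$ via truncation, Lusin and Tietze rather than abstract density on a Polish space, and it isolates the construction of $\psi$ into two auxiliary lemmas (Lemmas~\ref{integrabilitylemma} and~\ref{vanishinglemma}) using a single fixed $\beta \in (\alpha,\gamma)$ rather than a sequence $\beta_k \downarrow \alpha$.
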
 Before proving this result, let us show the following immediate consequence for random geometric rough paths, which will be of particular importance in Section \ref{sec: OptimalStoppingSection}.
\begin{corollary}\label{mainresultcoro}
Let $\geoRP{X}$ be a stochastic process on $\widehat{\mathscr{C}}^{\alpha}_g$ such that Assumption \ref{ass:measure} holds for its law $\mu_{\mathbf{X}}$. Then for all $A\in \mathbb{H}^2$, see \eqref{def:prognorm}, there exists a sequence of linear functionals $(f_n)_{n\in \mathbb{N}}\subset L_{\text{sig}}^{\lambda}(\Lambda_T^{\alpha})$ such that $\Vert A-f_n(\mathbf{X}|_{[0,\cdot]}) \Vert_{\mathbb{H}^2} \rightarrow 0$ as $n\to \infty$.
\end{corollary}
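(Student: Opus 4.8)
The plan is to reduce the statement to Theorem~\ref{mainresultappro} by transporting the probability measure to $\Lambda_T^{\alpha}$ and identifying $A$ with a genuine $L^2$-functional there. First I would invoke Lemma~\ref{progressivLemma}: since $A\in \mathbb{H}^2$ is in particular $(\mathcal{F}_t^{\mathbf{X}})$-progressive, there is a measurable function $f:(\Lambda^{\alpha}_T,\mathcal{B}(\Lambda^{\alpha}_T))\to(\mathbb{R},\mathcal{B}(\mathbb{R}))$ with $A_t=f(\mathbf{X}|_{[0,t]})$ for $d\mu:=dt\otimes d\mu_{\mathbf{X}}$-almost every $(t,\omega)$, where $\mu_{\mathbf{X}}$ is the law of $\geoRP{X}$ on $\widehat{\mathscr{C}}^{\alpha}_g$ as in Remark~\ref{rmk:geometricRV}.

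Next I would introduce the pushforward $\widehat{\mu}:=\phi\#\mu$ on $\Lambda^{\alpha}_T$ from \eqref{eq:pushforward} and check the two hypotheses in \eqref{measurecondition}. Finiteness is immediate: $\widehat{\mu}(\Lambda_T^{\alpha})=\mu([0,T]\times\widehat{\mathscr{C}}^{\alpha}_g)=T$. For the second condition, fix $\beta\in(\alpha,\gamma)$; since $\geoRP{X}$ is by assumption a process on $\widehat{\mathscr{C}}^{\beta}_g$ as well, $\mu_{\mathbf{X}}(\widehat{\mathscr{C}}^{\alpha}_g\setminus\widehat{\mathscr{C}}^{\beta}_g)=0$, and because the restriction of a $\beta$-Hölder geometric rough path to a subinterval is still $\beta$-Hölder, we have $\phi^{-1}(\Lambda_T^{\alpha}\setminus\Lambda_T^{\beta})\subseteq [0,T]\times(\widehat{\mathscr{C}}^{\alpha}_g\setminus\widehat{\mathscr{C}}^{\beta}_g)$, hence $\widehat{\mu}(\Lambda_T^{\alpha}\setminus\Lambda_T^{\beta})=\mu(\phi^{-1}(\Lambda_T^{\alpha}\setminus\Lambda_T^{\beta}))=0$. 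The change-of-variables formula for pushforwards then gives
\[
\int_{\Lambda_T^{\alpha}}f^2\,d\widehat{\mu}=\int_{[0,T]\times\widehat{\mathscr{C}}^{\alpha}_g}f(\phi(t,\mathbf{x}))^2\,d\mu=E\left[\int_0^TA_t^2\,dt\right]=\Vert A\Vert_{\mathbb{H}^2}^2<\infty,
\]
so that $f\in L^2(\Lambda_T^{\alpha},\widehat{\mu})$.

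Now I would apply Theorem~\ref{mainresultappro} with $p=2$ and the measure $\widehat{\mu}$: there exists a sequence $(f_n)_{n\in\mathbb{N}}\subset L^{\lambda}_{\text{sig}}(\Lambda_T^{\alpha})$ with $\Vert f-f_n\Vert_{L^2(\widehat{\mu})}\to 0$. Running the identity above backwards with $f$ replaced by $f-f_n$ yields
\[
\Vert A-f_n(\mathbf{X}|_{[0,\cdot]})\Vert_{\mathbb{H}^2}^2=E\left[\int_0^T\left(A_t-f_n(\mathbf{X}|_{[0,t]})\right)^2dt\right]=\Vert f-f_n\Vert_{L^2(\widehat{\mu})}^2\xrightarrow{n\to\infty}0,
\]
which is exactly the claim.

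The only genuinely delicate point is checking that $\widehat{\mu}$ inherits condition \eqref{measurecondition}, i.e.\ that the $\beta$-irregular set is $\widehat{\mu}$-null; this rests on reading ``process on $\widehat{\mathscr{C}}^{\alpha}_g$ for $\alpha<\gamma$'' as genuinely taking values in every $\widehat{\mathscr{C}}^{\beta}_g$ with $\beta<\gamma$, together with the elementary stability of Hölder regularity under restriction to subintervals, which controls $\phi^{-1}$ of the bad set. Everything else is bookkeeping with the pushforward measure and the measurability already supplied by Lemma~\ref{progressivLemma} and Remark~\ref{rmk:geometricRV}.
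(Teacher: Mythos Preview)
Your proof is correct and follows essentially the same route as the paper: pushing $dt\otimes d\mu_{\mathbf{X}}$ forward through $\phi$ to obtain $\widehat{\mu}$, verifying \eqref{measurecondition}, invoking Lemma~\ref{progressivLemma} to write $A_t=f(\mathbf{X}|_{[0,t]})$, and then applying Theorem~\ref{mainresultappro} with $p=2$ via the change-of-variables formula for pushforwards. You are in fact slightly more explicit than the paper in spelling out the inclusion $\phi^{-1}(\Lambda_T^{\alpha}\setminus\Lambda_T^{\beta})\subseteq[0,T]\times(\widehat{\mathscr{C}}^{\alpha}_g\setminus\widehat{\mathscr{C}}^{\beta}_g)$ and in checking $f\in L^2(\widehat{\mu})$ before invoking the theorem.
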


\begin{proof} By Lemma \ref{progressivLemma}, there exists a measurable function $f:\Lambda_T^{\alpha} \rightarrow \mathbb{R}$ such that $A_t=f(\mathbf{X}|_{[0,t]})$. Applying a standard change of measure result for the push-forward measure, see for example Bogachev and Ruas \cite[Theorem 3.6.1]{bogachev2007measure}, and denoting by $\phi$ the quotient map given in Definition \ref{def:stoppedRP}, we have \begin{align*}
    \Vert A_t-f_n(\mathbf{X}|_{[0,\cdot]}) \Vert_{\mathbb{H}^2}^2 &= E\bigg [\int_0^T\big (f(\mathbf{X}|_{[0,t]})-f_n(\mathbf{X}|_{[0,t]})\big )^2dt  \bigg ]  \\ & = \int_{\widehat{\mathscr{C}}_g^{\alpha}}\int_0^T(f\circ \phi-f_n \circ \phi)(t,\mathbf{X})^2dtd\mu_{\mathbf{X}} \\ & = \int_{\Lambda_T^{\alpha}}(f-f_n)^2d\widehat{\mu}= \Vert f-f_n \Vert_{L^2}^2\rightarrow 0,
\end{align*} as $n \to \infty$, where the convergence follows from Theorem \ref{mainresultappro}.
\end{proof}
The proof of Theorem \ref{mainresultappro} will make use of two lemmas. The first one is very elementary, and in the language of probability theory, it states that for every random variable $\xi$ in $\mathbb{R}_+$, we can find a strictly increasing and integrable function $\eta$, that is $E[\eta(\xi)]<\infty$. \begin{lemma}\label{integrabilitylemma}
    Let $(E,\mathcal{E},\mu)$ be a finite measure space, and consider a measurable function $\xi:(E,\mathcal{E}) \rightarrow (\mathbb{R}_+,\mathcal{B}(\mathbb{R}_+))$. Then there exists a strictly increasing function $\eta:\mathbb{R}_{+}\rightarrow \mathbb{R}_{+}$, such that $\eta(x) \rightarrow \infty$ as $x\to \infty$ and $\int_E (\eta \circ \xi) d\mu<\infty$.
\end{lemma}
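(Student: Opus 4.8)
The plan is to reduce the statement to a one-dimensional fact about the distribution of $\xi$ and then construct $\eta$ explicitly. Define the tail function $G:\mathbb{R}_+\to\mathbb{R}_+$ by $G(t):=\mu(\{\xi\ge t\})$. Since $\mu$ is finite and $\xi$ takes values in $\mathbb{R}_+$ (hence is finite), the sets $\{\xi\ge t\}$ decrease to $\{\xi=+\infty\}=\emptyset$ as $t\to\infty$, so continuity from above yields $G(t)\xrightarrow{t\to\infty}0$. This decay at infinity is the only property of $\xi$ that the argument uses.

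Next I would pick recursively a strictly increasing sequence $0=t_0<t_1<t_2<\cdots$ with $t_n\to\infty$ and $G(t_n)\le 2^{-n}$ for every $n\ge1$; this is possible precisely because $G$ vanishes at infinity. With this sequence fixed, set
$$\eta(t):=\#\{n\ge1:\ t_n\le t\}+\frac{t}{1+t}=\sum_{n\ge1}\mathbf{1}_{[t_n,\infty)}(t)+\frac{t}{1+t}.$$
The counting term is a non-decreasing step function tending to $\infty$ which jumps by exactly $1$ at each $t_n$, while $t\mapsto t/(1+t)$ is continuous, bounded by $1$, and strictly increasing; combining these one checks that $\eta$ is strictly increasing on all of $\mathbb{R}_+$ and that $\eta(t)\to\infty$.

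Finally I would verify integrability by Tonelli: interchanging the sum over $n$ with the integral,
$$\int_E(\eta\circ\xi)\,d\mu=\sum_{n\ge1}\mu(\{\xi\ge t_n\})+\int_E\frac{\xi}{1+\xi}\,d\mu\le\sum_{n\ge1}2^{-n}+\mu(E)<\infty.$$
I do not expect a genuine obstacle here; the only point needing a little care is arranging strict monotonicity of $\eta$ together with integrability at the same time, which is exactly what the bounded perturbation $t/(1+t)$ provides (a pure step function would be integrable but only non-decreasing). One could alternatively phrase the whole argument via the pushforward measure $\xi_{\#}\mu$ on $(\mathbb{R}_+,\mathcal{B}(\mathbb{R}_+))$, which is finite, and invoke the change-of-variables formula, but the direct layer-cake computation above is self-contained.
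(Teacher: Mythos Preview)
Your proof is correct and follows essentially the same approach as the paper: both arguments use the decay of the tail $\mu(\xi\ge t)\to 0$ to pick a sequence of thresholds with rapidly shrinking tails and then build $\eta$ to grow slowly along that sequence. The only cosmetic difference is that the paper constructs $\eta$ by piecewise linear interpolation between the thresholds (which already gives strict monotonicity), whereas you use a step function plus the bounded perturbation $t/(1+t)$ to enforce strictness.
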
 \begin{proof}
    Let $\nu$ be the push-foward of $\mu$ under $\xi$, that is $\nu(A):=\mu(\xi^{-1}(A))$ for all sets $A\in \mathcal{B}(\mathbb{R}_+)$. Then for any $\epsilon>0$ we can find $R>0$ large enough, such that $\nu(]R,+\infty[)\leq \epsilon$. In particular, for any strictly decreasing  sequence $(a_n)_{n\geq 0}$, such that $a_n \searrow 0$, we can find a strictly increasing sequence $(R_n)_{n\in \mathbb{N}}$ with $R_1 > 0$, such that $\nu(]R_n,\infty[) \leq \frac{a_n}{n^2}$. Now we can define a strictly increasing function $\eta$ as follows: Let $\eta(0)=0$ and for all $n\in \mathbb{N}$ define $\eta(R_n)= \frac{1}{a_{n-1}}$, and linearly interpolate on the intervals $[R_n,R_{n+1}[$. Then, setting $R_0=0$, and using a change of measure Bogachev and Ruas \cite[Theorem 3.6.1]{bogachev2007measure}, we have \begin{align*}
        \int_E(\eta \circ \xi)d\mu = \int_0^\infty \eta d\nu \leq \sum_{n\geq 0} \frac{1}{a_n}\nu(]R_n,+\infty[)\leq \frac{\nu([0,\infty[)}{a_0} + \sum_{n\geq 1}\frac{1}{n^2}<\infty.
    \end{align*}
\end{proof} The next lemma will be the key ingredient to apply the Stone-Weierstrass result Lemma \ref{density} in the main result. \begin{lemma}\label{vanishinglemma} Let $\alpha<\beta$ and define $\bar{\psi}:\Lambda_T^{\alpha} \rightarrow \mathbb{R}$ by $\bar{\psi}(\mathbf{x}) := 1_{\Lambda_T^{\beta}}(\mathbf{x}) ( \frac{1}{1+\eta(\vertiii{\tilde{\mathbf{x}}}_{\beta})} )^{1/p}$, where $\eta:\mathbb{R}_+ \rightarrow \mathbb{R}_+$ is strictly increasing such that $\eta(x) \rightarrow \infty$ as $x\to \infty$. Then we have $\bar{\psi} \in B_0(\Lambda_T^{\alpha})$, that is for all $\epsilon > 0 $ there exists $K\subseteq \Lambda_T^{\alpha}$ compact, such that $\sup_{\mathbf{x}\in K^c}\bar{\psi}(\mathbf{x})\leq \epsilon$.
\end{lemma}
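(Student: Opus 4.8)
The plan is to show that the sublevel sets of $\eta(\vertiii{\tilde{\mathbf{x}}}_{(\beta,N)})$ are relatively compact in $\Lambda^\alpha_T$, which forces $\bar\psi$ to be small off a compact set. First I would fix $\epsilon>0$ and, using that $\eta$ is strictly increasing with $\eta(x)\to\infty$, choose $R>0$ large enough that $\left(\frac{1}{1+\eta(R)}\right)^{1/p}\le\epsilon$. Then the set $A_R:=\{\mathbf{x}|_{[0,t]}\in\Lambda^\alpha_T : \mathbf{x}|_{[0,t]}\in\Lambda^\beta_T,\ \vertiii{\tilde{\mathbf{x}}}_{(\beta,N)}\le R\}$ satisfies $\sup_{\mathbf{x}\notin A_R}\bar\psi(\mathbf{x})\le\epsilon$ directly from the definition of $\bar\psi$ (outside $\Lambda^\beta_T$ the indicator vanishes, and inside $\Lambda^\beta_T$ with norm $>R$ the prefactor is $\le\epsilon$). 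So it remains to exhibit a compact $K\subseteq\Lambda^\alpha_T$ with $A_R\subseteq K$, and then $\sup_{K^c}\bar\psi\le\sup_{A_R^c}\bar\psi\le\epsilon$.

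Next I would produce the compact set. The natural candidate is $K:=\phi\big([0,T]\times B_R\big)$, where $B_R:=\{\mathbf{x}\in\widehat{\mathscr{C}}^\beta_g([0,T],\R^{d+1}):\vertiii{\mathbf{x}}_{(\beta,N)}\le R\}$ and $\phi$ is the restriction map from Definition \ref{def:stoppedRP} (more precisely its analogue with $[0,T]$-paths lifted in the $\beta$-Hölder sense). Two facts drive this: (i) $[0,T]\times B_R$ is compact — $[0,T]$ is compact, and $B_R$ is a closed bounded ball in $\widehat{\mathscr{C}}^\beta_g$, which is compactly embedded into $\widehat{\mathscr{C}}^\alpha_g$ for $\alpha<\beta$ by the standard interpolation/Arzelà–Ascoli argument for Hölder rough paths, see \cite[Proposition 8.27 and Corollary 8.28]{friz2010multidimensional}; hence $B_R$ is compact as a subset of $\widehat{\mathscr{C}}^\alpha_g$. (ii) $\phi$ is continuous from $[0,T]\times\widehat{\mathscr{C}}^\alpha_g$ onto $\Lambda^\alpha_T$ (by the very definition of the final topology, or equivalently via the metric $d_\Lambda$ of Remark \ref{rmk:polish}). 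Therefore $K=\phi([0,T]\times B_R)$ is compact in $\Lambda^\alpha_T$. Finally, any $\mathbf{x}|_{[0,t]}\in A_R$ is by construction the restriction to $[0,t]$ of some $\widehat{\mathscr{C}}^\beta_g$-path, namely $\tilde{\mathbf{x}}$ (the lift of $u\mapsto(u,x_{u\wedge t})$), whose $(\beta,N)$-norm is $\le R$ since restriction and the constant-extension do not increase the Hölder norm; hence $\mathbf{x}|_{[0,t]}=\phi(t,\tilde{\mathbf{x}})\in K$, giving $A_R\subseteq K$ as needed.

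The main obstacle I anticipate is the bookkeeping around the norm $\vertiii{\tilde{\mathbf{x}}}_{(\beta,N)}$ versus $\vertiii{\mathbf{x}}_{(\beta,N):[0,t]}$: one must check that constant-extending a stopped $\beta$-Hölder rough path past time $t$ keeps the level set inside the ball $B_R$, i.e.\ that $\vertiii{\tilde{\mathbf{x}}}_{(\beta,N)}=\vertiii{\mathbf{x}}_{(\beta,N):[0,t]}$ (or at least $\le R$), which is elementary but needs the increments of $\tilde{\mathbf{x}}$ over $[t,T]$ to be trivial and mixed increments over $s<t<u$ to be controlled by Chen's relation — this is exactly the construction already used in the proof of Lemma \ref{progressivLemma}. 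A second minor point is measurability/well-definedness of $\bar\psi$ on the disjoint union, but this follows since $1_{\Lambda^\beta_T}$ and $\mathbf{x}|_{[0,t]}\mapsto\vertiii{\tilde{\mathbf{x}}}_{(\beta,N)}$ are continuous (hence Borel) on $\Lambda^\alpha_T$. Once the compact embedding $\widehat{\mathscr{C}}^\beta_g\hookrightarrow\hookrightarrow\widehat{\mathscr{C}}^\alpha_g$ is invoked, everything else is routine, so I would keep the write-up short and lean on \cite{friz2010multidimensional} for the embedding.
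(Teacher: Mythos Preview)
Your proposal is correct and follows essentially the same route as the paper: reduce to showing that the sublevel set $A_R$ is contained in the image under $\phi$ of $[0,T]$ times a $\beta$-H\"older ball, then use the compact embedding $\widehat{\mathscr{C}}^\beta_g\hookrightarrow\widehat{\mathscr{C}}^\alpha_g$ (Arzel\`a--Ascoli plus interpolation) together with continuity of $\phi$ to get compactness. The paper cites \cite[Theorem A.3]{cuchiero2023global} for the compact embedding where you invoke \cite{friz2010multidimensional}, but the argument is identical; your write-up is in fact slightly cleaner in that you work directly with the compact image $K=\phi([0,T]\times B_R)$ rather than asserting compactness of $A_R$ itself. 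Your anticipated ``norm bookkeeping'' obstacle is a non-issue here, since $A_R$ is \emph{defined} via $\vertiii{\tilde{\mathbf{x}}}_{(\beta,N)}\le R$, so $\tilde{\mathbf{x}}\in B_R$ holds by definition and no comparison of norms is needed.
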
 
\begin{proof} Recall that an element $\mathbf{x} \in \Lambda_T^{\alpha}$ can be written as $\mathbf{x}= \mathbf{x}|_{[0,t]} \in \widehat{\mathscr{C}}^{\alpha}_g([0,t])$, where $\mathbf{x}|_{[0,t]}$ is the geometric rough-path lift of some time-augmented, $\alpha$-Hölder continuous $[0,t] \ni u \mapsto (u,\omega_u)$. Moreover, recall that we define $\tilde{\mathbf{x}}\in \widehat{\mathscr{C}}^{\alpha}_g([0,T])$ to be the geometric rough-path lift of $u \mapsto (u,x_{t\land u}) $. If we can show that for any $R>0$, the sets $$B_R = \{\mathbf{x} \in \Lambda_T^{\beta}: \vertiii{\tilde{\mathbf{x}}}_{\beta}\leq R\}\subseteq \Lambda_T^{\alpha}$$ are compact, then we are done. Indeed, in this case we have that for any $\epsilon \in (0,1)$, we can choose $\widehat{R}\geq \eta^{-1}(\frac{1-\epsilon^p}{\epsilon^p})$ and then $$\psi(\mathbf{x}) = \bigg (\frac{1}{1+\eta(\vertiii{\tilde{\mathbf{x}}}_{\beta})}\bigg ) ^{1/p} \leq \frac{1}{\sqrt{1+\widehat{R}}} \leq \epsilon, \quad \forall \mathbf{x} \in \Lambda_T^{\alpha} \setminus B_{\widehat{R}},$$ and therefore $\bar{\psi} \in B_0(\Lambda_T^{\alpha})$. Now to prove compactness, we can notice that by definition of the quotient map $\phi$, see Definition \ref{def:stoppedRP}, we have $$B_R\subseteq \phi \Big ( [0,T] \times \big \{ \mathbf{x}\in \widehat{\mathscr{C}}^{\beta}_g([0,T],\mathbb{R}^{d+1}): \vertiii{\mathbf{x}}_{\beta} \leq R\big \}\Big ),$$ since for all $\mathbf{x}=\mathbf{x}|_{[0,t]}\in B_R$ we have $\mathbf{x}= \phi(t,\tilde{\mathbf{x}})$ by construction. Since $\phi$ is continuous, it is enough the show that $[0,T] \times \{\mathbf{x} \in \widehat{\mathscr{C}}^{\beta}_g:\vertiii{\mathbf{x}}_{\beta}\leq R \}$ is compact in $[0,T] \times \widehat{\mathscr{C}}^{\alpha}_g$, which by Tychonoffs theorem is true if the sets $\{\mathbf{x} \in \widehat{\mathscr{C}}^{\beta}_g:\vertiii{\mathbf{x}}_{\beta}\leq R \}$ are compact in $\widehat{\mathscr{C}}^{\alpha}_g$. But the latter follows from the general fact that $\beta$-Hölder spaces are compactly embedded in $\alpha$-Hölder spaces for $\alpha<\beta$. This can be proved by applying the Arzelà–Ascoli theorem together with an interpolation argument for the equicontinuous and $\vertiii{\cdot}_{\beta}$-bounded subsets of $\widehat{\mathscr{C}}^{\alpha}_g$, which was carried out in Cuchiero et al. \cite[Theorem A.3]{cuchiero2023global} for example. Thus we can conclude that $B_R\subseteq \Lambda_T^{\alpha}$ is compact, which finishes the proof.
\end{proof}
Finally, we are ready to proof the main result.

\begin{proof}{\textbf{of Theorem \ref{mainresultappro}}} First, recall that the measure $\widehat{\mu}$ is defined as the pushforward of the product measure $dt \otimes d\mu$ via the surjective map $\phi$ from Definition 2.1 introduced in in Remark 2.3. We can easily see that for $\alpha,\beta$ as in Assumption 2.6, we have \begin{equation*}\phi^{-1}(\Lambda_T^{\alpha}\setminus \Lambda_T^{\beta}) =\phi^{-1}(\Lambda_T^{\alpha})\setminus \phi^{-1}(\Lambda_T^{\beta}) \subseteq [0,T] \times \widehat{\mathscr{C}}^{\alpha}_g \setminus \widehat{\mathscr{C}}^{\beta}_g,\end{equation*} and thus $\widehat{\mu}(\Lambda_T^{\alpha}\setminus \Lambda_T^{\beta}) \leq dt \otimes d\mu ( [0,T] \times \widehat{\mathscr{C}}^{\alpha}_g \setminus \widehat{\mathscr{C}}^{\beta}_g)=0$. Fix $\epsilon>0$. For any value $K > 0$, we can define the function $f_K(x) := 1_{\{|f(x)| \leq K\}}(x)f(x)$, and notice that we have $\Vert f-f_K\Vert_{L^p} \rightarrow 0$ as $K\to \infty$ by dominated convergence. Hence we can find a $K_{\epsilon}>0$ such that $\Vert f-f_{K_{\epsilon}}\Vert_{L^p} \leq \epsilon /3$. Since $\widehat{\mu}$ is a finite measure on $\Lambda^{\alpha}_T$, by Lusin's theorem, we can find a closed set $C_{\epsilon} \subset \Lambda^{\alpha}_T$, such that $f_{K_{\epsilon}}$ restricted to $C_{\epsilon}$ is continuous, and $\widehat{\mu}(\Lambda^{\alpha}_T \setminus C_{\epsilon}) \leq \epsilon^p/(6K_{\epsilon})^p$. By Tietze's extension theorem, we can find a continuous extension $\widehat{f}_{\epsilon}\in C_b(\Lambda^{\alpha}_T,[-K_{\epsilon},K_{\epsilon}])$ of $f_{K_{\epsilon}}$ such that\begin{equation*}
    \Vert f_{K_{\epsilon}}-\widehat{f}_{\epsilon} \Vert_{L^p}^p =  \int_{\Lambda^{\alpha}_T \setminus C_{\epsilon}}|f_{\epsilon}-f_{K_{\epsilon}}|^pd\widehat{\mu} \leq (2K_{\epsilon})^p\widehat{\mu}({\Lambda^{\alpha}_T \setminus C_{\epsilon}}) = (\epsilon/3)^p.
\end{equation*} We are left with approximating $\widehat{f}_{\epsilon} \in C_b(\Lambda^{\alpha}_T,\mathbb{R})$ by linear functionals of the robust signature, that is applying Lemma \ref{density}. By Assumption \ref{ass:measure} we can choose $\beta \in (\alpha,\frac{1}{\lfloor q \rfloor})$ and from Lemma \ref{integrabilitylemma} we know that there exists an increasing function $\eta:\mathbb{R}_+ \to \mathbb{R}_+$, such that $\eta(x) \rightarrow \infty$ as $x\to \infty$ and $\int_{\Lambda_T^{\alpha}}\eta(\vertiii{\tilde{\mathbf{x}}}_{\beta})d\widehat{\mu}(x) <\infty$, where $\tilde{\mathbf{x}}$ is the extension of the stopped rough path from the interval $[0,t]$ to $[0,T]$, see also Remark \ref{rmk:polish}. Define the function $\psi:\Lambda_T^{\beta} \rightarrow \mathbb{R}_+$ by $\psi(\mathbf{x}|_{[0,t]}):=  (\frac{1}{1+\eta(\vertiii{\tilde{\mathbf{x}}}_{\beta})} )^{1/p}$.  In Lemma \ref{vanishinglemma} we saw that $\bar{\psi}(\mathbf{x}) := 1_{\Lambda_T^{\beta}}(\mathbf{x})\psi(\mathbf{x})$ belongs to $B_0(\Lambda_T^{\alpha})$, that is for all $\delta >0$ there exists a compact set $K\subseteq \Lambda_T^{\alpha}$, such that $\sup_{x\in K^c}\bar{\psi}(x) \leq \delta$. Notice that Lemma \ref{integrabilitylemma} yields an increasing function $\eta$, which is used to build the $\bar{\psi}$ in Lemma \ref{vanishinglemma}, in a way that $\Xi:= \int_{\Lambda_T^{\beta}} \frac{1}{\bar{\psi}^p}d\widehat{\mu}$ is finite, which will be needed below. If $\eta$ is an arbitrary increasing function, the integrability of $\frac{1}{\bar{\psi}^p}$ needs to be assumed (e.g. for $\eta(x)=x$ we would need finite first moment of the rough-path norm). By Lemma \ref{density}, we can find $f_{\epsilon} \in L^{\lambda}_{\text{sig}}(\Lambda_T^{\alpha})$, such that \begin{equation*}
    \Vert \widehat{f}_{\epsilon}-f_{\epsilon} \Vert_{\infty,\bar{\psi}}^p \leq \epsilon^p/(3^p\Xi).
 \end{equation*}  Using that $\mu$ assigns full measure to the subspace $\Lambda_T^{\beta} \subseteq \Lambda_T^{\alpha}$, we have \begin{align*}
    \Vert \widehat{f}_\epsilon -f_{\epsilon} \Vert_{L^p}^p  = \int_{\Lambda_T^{\beta}}|\widehat{f}_{\epsilon}-f_{\epsilon}|^pd{\widehat{\mu}} & \leq \sup_{x\in \Lambda_T^{\beta}}\Big ( \psi(x)\big(\widehat{f}_{\epsilon}(x)-f_{\epsilon}(x)\big ) \Big)^p \int_{\Lambda_T^{\beta}}\frac{1}{\psi^p}d{\widehat{\mu}} \\ & \leq \Vert \widehat{f}_{\epsilon}-f_{\epsilon} \Vert_{\infty,\bar{\psi}}^p \Xi \leq (\epsilon/3)^p.
\end{align*} Finally, we can conclude by the triangle inequality \begin{equation*}
    \Vert f-f_{\epsilon}\Vert_{L^p} \leq \Vert f-f_{K_{\epsilon}}\Vert_{L^p} + \Vert f_{K_{\epsilon}}-\widehat{f}_{\epsilon}  \Vert_{L^p} + \Vert \widehat{f}_{\epsilon} -f_\epsilon \Vert_{L^p} \leq \epsilon.
\end{equation*}
\end{proof} 


\section{Optimal stopping with signatures}\label{sec: OptimalStoppingSection} In this section we exploit the signature approximation theory presented in Section \ref{sec:L2approxsection}, in order solve the optimal stopping problem in a general setting. \subsection{Framework and problem formulation} \label{sec:framework}
Suppose we have a complete, filtered probability space $(\Omega,\mathcal{F},\mathbb{F}=(\mathcal{F}_t)_{t\in [0,T]},P)$ for some $T>0$, fulfilling the usual conditions, and fix $\alpha \in (0,1)$. For any $\mathbb{F}-$adapted and $\alpha$-Hölder continuous stochastic process $(X_t)_{t\in[0,T]}$, taking values in $\R^d$ with $X_0=x_0$, we consider

\begin{itemize} \item $\geoRP{X} \in \widehat{\mathscr{C}}^{\alpha}_g$ a geometric $\alpha$-Hölder rough-path lift of $(t,X_t)$, such that its law $\mu_{\mathbf{X}}$ fulfills Assumption \ref{ass:measure}, \item $\Sig{X}$ the robust 
rough-path signature introduced in Section \ref{sec:L2approxsection}, \item $(Z_t)_{t\in [0,T]}$ is a real-valued, $\mathbb{F}^{\mathbf{X}}-$adapted process such that $\sup_{t\in [0,T]}|Z_t| \in L^2$. \end{itemize} The optimal stopping problem then reads \begin{equation}\label{eq:OSPmain}
    y_0 = \sup_{\tau \in \mathcal{S}_0}E [Z_{\tau}],
\end{equation} where $\mathcal{S}_0$ denotes the set of $\mathbb{F}^{\mathbf{X}}-$stopping-times on $[0,T]$.
\begin{remark} Notice that the framework described above is very general in two ways: First, we only assume $\alpha$-Hölder continuity for the state process $X$, including in particular non-Markovian and non-semimartingale regimes, which one for instance encounters in rough volatility models, see Section~\ref{sec:AmericanOptions}. Secondly, considering a projection onto the first coordinate $\geoRP{X} \mapsto (t,X_t)$, for any payoff function $\phi :[0,T] \times \R^d \rightarrow \mathbb{R}$ our framework includes the more common form of the optimal stopping problem $$ y_0 = \sup_{\tau \in \mathcal{S}_0}E \big [\phi(\tau,X_{\tau})\big ].$$ \end{remark}

\begin{remark}
In general, there is no canonical way of lifting a process $(t,X_t)$ to a random rough path $\mathbf{X}$, and often careful justification is required, for instance based on a rough-path version of the Kolmogorov criterion, see Friz and Hairer \cite[Theorem 3.1]{friz2020course}. However, for a big class of process (e.g. semimartingales, Gaussian processes, one-dimensional processes) there are canonical choices for random geometric rough-path lifts, and we will explain in Section \ref{sec:Numerics} how to do it. Moreover, we will always look at lifts such that $\mathbb{F}^{\mathbf{X}}=\mathbb{F}^{X}$, that is the optimal stopping problem has the same underlying information when observing $\mathbf{X}$ or $X$.
\end{remark}

\subsection{Primal optimal stopping with signatures}\label{sec: LSsection} 
First we present a method to compute a lower-biased approximation $y_0^{L}\leq y_0$ to the optimal stopping problem \eqref{eq:OSPmain}. More precisely, we construct a regression-based approach, generalizing the famous algorithm from Longstaff and Schwartz~\cite{longstaff2001valuing}, returning a sub-optimal exercise strategy. Let us first quickly describe the main idea of most regression-based approaches.

Replacing the interval $[0,T]$ by a finite grid $\{0=t_0<t_1< \dots <t_N = T\}$, the discrete optimal stopping problem reads 
\[
y_{0}^N =\sup_{\tau \in \mathcal{S}_0^N}E [Z_{\tau}],
\]
     where $\mathcal{S}_n^N$ is the set of stopping times taking values in $\{t_n,\dots,t_N\}$ for $n=0,\dots,N$, with respect to the discrete filtration $\mathbb{F}^{\geoRP{X},N}=(\mathcal{F}^{\geoRP{X}}_{t_n})_{n=0,\dots,N}$. We define the discrete \textit{Snell-envelope} by \begin{equation}\label{eq:SnellEnvelope}
    Y^{N}_{t_n} = \esssup_{\tau \in \mathcal{S}_n^N}E [Z_{\tau}|\mathcal{F}^{\geoRP{X}}_{t_n} ], \quad 0 \leq n \leq N,
\end{equation} and one can show that $Y^N$ satisfies the discrete dynamic programming principle (DPP) 
    \begin{equation}
    Y^{N}_{t_n}= \max \big(Z_{t_n},E [Y^{N}_{t_{n+1}}|\mathcal{F}^{\geoRP{X}}_{t_n} ] \big ), \quad n=0,\dots,N-1,\label{eq:DPP}
\end{equation} see for instance Peskir and Shiryaev \cite[Theorem 1.2]{peskir2006optimal}. Now the key idea of most regression-based approaches, such as for instance Longstaff and Schwartz \cite{longstaff2001valuing}, is that assuming $X$ is a Markov process, one can choose a suitable family of basis functions $(b^{k})$ and apply least-square regression to approximate \begin{equation}\label{eq:approxMarkov}
E [Y^{N}_{t_{n+1}}|\mathcal{F}^{\mathbf{X}}_{t_n} ] \approx \sum_{k = 0}^D\alpha_kb^{k}_n(X_{t_n}), \quad 0 \leq n \leq N-1, \quad \alpha_k \in \mathbb{R}, \quad \forall k \leq D,
\end{equation} and then make use of the DPP to recursively approximate $Y_0^N=y_0^N$. Of course, the approximation of the conditional expectations in \eqref{eq:approxMarkov} heavily relies on the Markov-property, and thus one cannot expect such an approximation to converge in non-Markovian settings. 

Returning to our framework, we need to replace \eqref{eq:approxMarkov} by a suitable approximation for the conditional expectations \begin{equation*}
E[Y^{N}_{t_{n+1}}|\mathcal{F}^{\geoRP{X}}_{t_n}]=f_n(\geoRP{X}|_{[0,t_n]}), \quad 0 \leq n \leq N-1.
\end{equation*} The universality result Theorem~\ref{mainresultappro} now suggests to approximate $f_n$ by a sequence of linear functionals of the robust signature, that is \begin{equation*}
f_n(\geoRP{X}|_{[0,t_n]}) \approx \langle \geoRP{X}^{<\infty}_{0,t_n},\ell\rangle, \quad \ell \in \mathcal{W}^{d+1},
\end{equation*}  where $\mathcal{W}^{d+1}$ is the linear span of words introduced in Section \ref{presectionRP}.

\subsubsection{Longstaff-Schwartz with signatures} In this section we present a version of the Longstaff-Schwartz (LS) algorithm \cite{longstaff2001valuing}, using signature-based least-square regression. A convergence analysis for the LS-algorithm was presented in Clément et al. \cite{clement2002analysis}, and combining their techniques with the universality of the signature, allows us to recover a convergent algorithm.

The main idea of the LS-algorithm is to re-formulate the DPP \eqref{eq:DPP} for stopping times, taking advantage of the fact that optimal stopping times can be expressed in terms of the Snell-envelope. More precisely, it is proved in Peskir and Shiryaev \cite[Theorem 1.2]{peskir2006optimal} that the stopping times $\tau_n := \min\{t_m \geq t_n: Y^{N}_{t_m}=Z_{t_m}, m=n,\dots,N\}$ are optimal in \eqref{eq:SnellEnvelope}, and hence one recursively defines \begin{equation*}
        \begin{aligned}
            \tau_N & = t_N \\ \tau_n &= t_n1_{\big\{Z_{t_n} \geq E [Z_{\tau_{n+1}}|\mathcal{F}^{\geoRP{X}}_{t_n}]\big \}}+\tau_{n+1}1_{\big\{Z_{t_n} < E [Z_{\tau_{n+1}}|\mathcal{F}^{\geoRP{X}}_{t_n} ]\big \}}, \quad n=0,\dots,N-1.
        \end{aligned}
\end{equation*} Now for any truncation level $K\in \mathbb{N}$ for the signature and for some fixed $n=0,\dots,N-1$, we assume that we are given an approximation $\tau^K_{n+1}$ of $\tau_{n+1}$. Then, we approximate the conditional expectation $E[Z_{\tau_{n+1}^K}|\mathcal{F}_{t_n}^{\mathbf{X}}]$ by solving the following minimization problem \begin{equation}\label{eq:orthogonalprojections}
         \ell^*:=\ell^{*,n,K} = \underset{\ell \in \mathcal{W}_{\leq K}^{d+1}}{\operatorname{argmin}}  \Vert Z_{\tau^K_{n+1}}-\langle \mathbf{X}_{0,t_n}^{\leq K},\ell \rangle  \Vert_{L^2}, \quad n=0,\dots, N-1. 
    \end{equation} Setting $\psi^{n,K}(\mathbf{x}) = \langle \mathbf{x}^{\leq K},\ell^{*,n,K}\rangle \in L_{\text{sig}}^{\lambda}$, we can define the following approximating sequence of stopping times \begin{equation*}
    \begin{aligned}
        \tau^K_N & = t_N \\ \tau^K_n &= t_n1_{\big\{Z_{t_n} \geq \psi^{n,K}(\geoRP{X}|_{[0,t_n]})\big \}}+\tau_{n+1}^{K}1_{\big\{Z_{t_n} < \psi^{n,K}(\geoRP{X}|_{[0,t_n]})\big \}}, \quad n=0,\dots,N-1.
    \end{aligned}
\end{equation*} The following result shows convergence as the depth of the signature goes to infinity, and the proof is discussed in Appendix \ref{appendixLS}. \begin{proposition}\label{THM:LSconvergence1} For all $n=0,\dots,N$ we have $$\lim_{K\to \infty }E[Z_{\tau^K_{n}}|\mathcal{F}^{\geoRP{X}}_{t_n}] = E[Z_{\tau_{n}}|\mathcal{F}^{\geoRP{X}}_{t_n}] \text{ in }L^2.$$
 In particular, we have $y_0^{K,N} = \max(Z_{t_0},E[Z_{\tau_1^K}])  \rightarrow y_0^N$ as $K\to \infty$.
    \end{proposition}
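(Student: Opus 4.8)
The plan is to follow the backward-induction strategy of Clément--Lamberton--Protter \cite{clement2002analysis}, using Corollary \ref{mainresultcoro} (equivalently Theorem \ref{mainresultappro} with $p=2$) as the replacement for the classical assumption that the regression basis is total in $L^2$. Write $q_n := E[Z_{\tau_n}\mid\mathcal{F}^{\geoRP{X}}_{t_n}]$ and $q_n^K := E[Z_{\tau_n^K}\mid\mathcal{F}^{\geoRP{X}}_{t_n}]$, and recall that $q_n$ is a measurable functional $f_n(\geoRP{X}|_{[0,t_n]})$ of the stopped rough path, hence, being square-integrable, lies in the closure of $\{\langle\mathbf{X}_{0,t_n}^{\leq K},l\rangle : l \in \mathcal{W}^{d+1}_{\leq K},\ K\in\mathbb N\}$ in $L^2$; this is exactly what \eqref{measurecondition}, Lemma \ref{progressivLemma} and Theorem \ref{mainresultappro} provide, since the time-slice $t_n$ is fixed and $Z_{\tau_{n+1}}\in L^2$ by the standing assumption $\sup_t|Z_t|\in L^2$.

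First I would set up the induction over $n$ from $N$ down to $0$. The base case $n=N$ is trivial since $\tau_N^K=\tau_N=t_N$. For the inductive step, assume $Z_{\tau_{n+1}^K}\to Z_{\tau_{n+1}}$ in $L^2$ as $K\to\infty$ (this is the natural induction hypothesis; note $E[Z_{\tau_{n+1}^K}\mid\mathcal{F}^{\geoRP{X}}_{t_{n+1}}]\to q_{n+1}$ then follows by the conditional Jensen/contraction property of conditional expectation). One first shows that the regression target converges: $E[Z_{\tau_{n+1}^K}\mid\mathcal{F}^{\geoRP{X}}_{t_n}]\to E[Z_{\tau_{n+1}}\mid\mathcal{F}^{\geoRP{X}}_{t_n}] = q_{n+1}'$ (abusing notation for the one-step target) in $L^2$. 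Next, by definition \eqref{eq:orthogonalprojections}, $\psi^{n,K}(\geoRP{X}|_{[0,t_n]})$ is the $L^2$-orthogonal projection of $Z_{\tau_{n+1}^K}$ onto the finite-dimensional space $\mathcal V_{n,K} := \{\langle\mathbf{X}_{0,t_n}^{\leq K},l\rangle : l\in\mathcal{W}^{d+1}_{\leq K}\}$. Decompose the projection error into (i) the difference of projections of $Z_{\tau_{n+1}^K}$ and $Z_{\tau_{n+1}}$, which is controlled by $\|Z_{\tau_{n+1}^K}-Z_{\tau_{n+1}}\|_{L^2}\to 0$ since orthogonal projections are $1$-Lipschitz, and (ii) the distance from $E[Z_{\tau_{n+1}}\mid\mathcal{F}^{\geoRP{X}}_{t_n}]$ to $\mathcal V_{n,K}$, which tends to $0$ by Theorem \ref{mainresultappro} because $\mathcal V_{n,K}\uparrow$ is dense in $L^2(\mathcal{F}^{\geoRP{X}}_{t_n})$ (restriction of $L^\lambda_{\mathrm{sig}}$ functionals to the slice $t_n$, noting $\langle\lambda(\cdot),l\rangle$ and $\langle\cdot,l\rangle$ span the same finite-dimensional spaces up to the nonvanishing normalization factors). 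Hence $\psi^{n,K}(\geoRP{X}|_{[0,t_n]})\to E[Z_{\tau_{n+1}}\mid\mathcal{F}^{\geoRP{X}}_{t_n}]$ in $L^2$.

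Then I would pass to the stopping times. From the last display and \eqref{eq:recur1}--\eqref{recurapprox}, along any subsequence one can extract a further subsequence with almost-sure convergence $\psi^{n,K}(\geoRP{X}|_{[0,t_n]})\to E[Z_{\tau_{n+1}}\mid\mathcal{F}^{\geoRP{X}}_{t_n}]$. On the event $\{Z_{t_n}\neq E[Z_{\tau_{n+1}}\mid\mathcal{F}^{\geoRP{X}}_{t_n}]\}$ the indicators in \eqref{recurapprox} eventually agree with those in \eqref{eq:recur1}, so $\tau_n^K\to\tau_n$ a.s. there; one must argue separately that the contribution of the boundary set $\{Z_{t_n}=E[Z_{\tau_{n+1}}\mid\mathcal{F}^{\geoRP{X}}_{t_n}]\}$ is negligible, which follows exactly as in \cite[Lemma 3.2]{clement2002analysis}: on that set $Z_{t_n}=Z_{\tau_n}$ already, so the two rewards coincide regardless of which branch is taken. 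Consequently $Z_{\tau_n^K}\to Z_{\tau_n}$ a.s.\ along the subsequence, and since $|Z_{\tau_n^K}|\leq\sup_t|Z_t|\in L^2$, dominated convergence upgrades this to $L^2$; as the limit is subsequence-independent, the full sequence converges. Taking conditional expectations gives $E[Z_{\tau_n^K}\mid\mathcal{F}^{\geoRP{X}}_{t_n}]\to E[Z_{\tau_n}\mid\mathcal{F}^{\geoRP{X}}_{t_n}]$ in $L^2$, closing the induction; the final claim for $y_0^{K,N}$ is the case $n=1$ combined with taking expectations.

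The main obstacle I anticipate is the interaction between the two approximations — the growing signature truncation level $K$ and the propagated error in $Z_{\tau_{n+1}^K}$ — which must be handled simultaneously rather than sequentially; the clean way around it is the decomposition in step two into a Lipschitz-in-target part plus a pure density part, so that the density statement of Theorem \ref{mainresultappro} is only ever applied to the fixed limiting target $E[Z_{\tau_{n+1}}\mid\mathcal{F}^{\geoRP{X}}_{t_n}]$ and not to the moving objects $Z_{\tau_{n+1}^K}$. A secondary technical point is checking that the finite-dimensional spaces $\mathcal V_{n,K}$ genuinely exhaust $L^2(\mathcal F^{\geoRP X}_{t_n},P)$: here one notes that evaluating at the fixed time $t_n$ corresponds to pushing the measure $dt\otimes d\mu_{\geoRP X}$-analogue forward by $\phi(t_n,\cdot)$ onto $\Lambda_T^\alpha$, so Theorem \ref{mainresultappro} applies verbatim to the resulting finite measure on stopped paths, and the tensor-normalization $\lambda$ only rescales each coordinate by a strictly positive factor and therefore does not shrink the linear span.
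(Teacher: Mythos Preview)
Your strategy is exactly the paper's: backward induction \`a la Cl\'ement--Lamberton--Protter, with Theorem~\ref{mainresultappro} supplying the density of the signature basis, and the projection error split into a Lipschitz-in-target part plus a pure density part. The decomposition $\psi^{n,K}\to E[Z_{\tau_{n+1}}\mid\mathcal F^{\geoRP X}_{t_n}]$ via (i) $\|\psi^{n,K}-\hat\psi^{n,K}\|_{L^2}\leq\|Z_{\tau_{n+1}^K}-Z_{\tau_{n+1}}\|_{L^2}$ and (ii) density of $\mathcal V_{n,K}$ is literally what the paper does.

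There is, however, a genuine gap in how you close the induction. You carry the \emph{pathwise} hypothesis $Z_{\tau_{n+1}^K}\to Z_{\tau_{n+1}}$ in $L^2$ and then try to propagate it to level $n$ via a.s.\ convergence of indicators. Your boundary-set argument is wrong for this pathwise statement: on $\{Z_{t_n}=E[Z_{\tau_{n+1}}\mid\mathcal F^{\geoRP X}_{t_n}]\}$ you have $Z_{\tau_n}=Z_{t_n}$, but if the approximate rule says ``continue'' then $Z_{\tau_n^K}=Z_{\tau_{n+1}^K}\to Z_{\tau_{n+1}}$, and $Z_{\tau_{n+1}}$ need not equal $Z_{t_n}$ --- they only share the same $\mathcal F^{\geoRP X}_{t_n}$-conditional expectation. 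So ``the two rewards coincide regardless of which branch is taken'' is false pathwise, and without the CLP assumption $P(Z_{t_n}=E[Z_{\tau_{n+1}}\mid\mathcal F^{\geoRP X}_{t_n}])=0$ your induction hypothesis cannot be propagated.

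The paper avoids this by running the induction on the weaker statement $E[Z_{\tau_n^K}\mid\mathcal F^{\geoRP X}_{t_n}]\to E[Z_{\tau_n}\mid\mathcal F^{\geoRP X}_{t_n}]$ directly, writing
\[
E[Z_{\tau_n^K}-Z_{\tau_n}\mid\mathcal F^{\geoRP X}_{t_n}]=\bigl(Z_{t_n}-E[Z_{\tau_{n+1}}\mid\mathcal F^{\geoRP X}_{t_n}]\bigr)\bigl(1_{A(n,K)}-1_{A(n)}\bigr)+E[Z_{\tau_{n+1}^K}-Z_{\tau_{n+1}}\mid\mathcal F^{\geoRP X}_{t_n}]\,1_{A(n,K)^c},
\]
and then observing by a four-case check on $A(n),A(n,K)$ that the first term $L_n^K$ satisfies the \emph{pointwise} bound $|L_n^K|\leq|\psi^{n,K}(\geoRP X|_{[0,t_n]})-E[Z_{\tau_{n+1}}\mid\mathcal F^{\geoRP X}_{t_n}]|$. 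This bound holds everywhere, boundary included, so no subsequence extraction, no indicator convergence, and no zero-measure assumption are needed. Your projection analysis then finishes the job. (As a side remark, your worry about $\lambda$ shrinking the linear span is moot: the paper's convention in Section~\ref{sec:framework} is that $\mathbf X^{<\infty}$ already denotes the robust signature, so $\mathcal V_{n,K}$ is by definition the span of truncated robust-signature coordinates and Theorem~\ref{mainresultappro} applies directly.)
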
 Let us now describe how to numerically solve \eqref{eq:orthogonalprojections} using Monte-Carlo simulations. Besides the truncation of the signature at some level $K$, we introduce two further approximations steps: First, we replace the signature $\Sig{X}$ by some discretized version $\Sig{X}(J)$, for example piecewise linear approximation of the iterated integrals, on some fine grid $s_0=0<s_1<\dots <s_J = T$, such that $\langle \Sig{X}_{0,t}(J),v \rangle  \rightarrow \langle \Sig{X}_{0,t},v \rangle$ as $J \to \infty$ in $L^2$ for all words $v$ and $t \in [0,T]$. Secondly, 
for $i=1,\dots,M$ i.i.d sample paths of $Z$ and the discretized and truncated signature $\mathbf{X}^{\leq K} = \mathbf{X}^{\leq K}(J)$, assuming that $\tau^{K,J}_{n+1}$ is known, we estimate $\ell^{*}$ by solving \eqref{eq:orthogonalprojections} via linear least-square regression. This yields an estimator $\ell^{*} =\ell^{*,n,J,K,M}$. Defining $\psi^{n,J,K,M}(\mathbf{x})= \langle \mathbf{x}^{\leq K} ,\ell^{*,n,J,K,M}\rangle$ leads to a recursive algorithm for stopping times, for $i=1,\dots,M$  \begin{equation}
    \begin{aligned}
        \tau^{K,J,(i)}_N &= t_N  \\ \tau^{K,J,(i)}_n & = t_n1_{\big\{Z^{(i)}_{t_n} \geq \psi^{n,J,K,M}(\geoRP{X}^{(i)}|_{[0,t_n]})\big \}}+\tau^{K,J,(i)}_{n+1}1_{\big\{Z^{(i)}_{t_n} < \psi^{n,J,K,M}(\geoRP{X}^{(i)}|_{[0,t_n]})\big \}}.\label{eq:recur2}
    \end{aligned}
\end{equation}  Then the following law of large number type of result holds true, which almost directly follows from Clément et al. \cite[Theorem 3.2]{clement2002analysis}, see Appendix \ref{appendixLS}.
    \begin{proposition}\label{THM:LSconvergence2} 
    
    For fixed $K,J$ we have $$\lim_{M\to \infty}\frac{1}{M}\sum_{i=1}^MZ^{(i)}_{\tau^{K,J,(i)}_n}=E[Z_{\tau_n^{K,J}}] \text{ a.s. }$$ Moreover, setting $y_0^{N,K,J,M} := \max (Z_{t_0},\frac{1}{M}\sum_{i=1}^MZ^{(i)}_{\tau^{K,J,(i)}_1})$ we have \begin{equation*} \lim_{K\to \infty}\lim_{J\to \infty}\lim_{M\to \infty} y_0^{K,N,J,M}
 = y_0^{N},
\end{equation*} where the convergence with respect to $M$ is almost sure convergence.
    \end{proposition}
    
\begin{remark} The recursion of stopping times \eqref{eq:recur2}, resp. the resulting linear functionals of the signature $\psi^{n,K,M}$, provide a stopping policy for each sample path of $Z$. By resimulating $\tilde{M}$ i.i.d samples of $Z$ and the signature $\Sig{X}$, we can notice that the resulting estimator $y_0^{K,N,J,\tilde{M}}$ is lower-biased, that is $y_0^{K,N,J,\tilde{M}} \leq y_0^N$, since the latter is defined by taking the supremum over all possible stopping policies.
\end{remark}
    
\subsection{Dual optimal stopping with signatures }\label{sec:DPsignaturesection}
In this section, we approximate solutions to the optimal stopping problem in its dual formulation, leading to upper bounds $y_0^U\geq y_0$ for \eqref{eq:OSPmain}. The dual representation goes back to Rogers \cite{rogers2002monte}, where the author shows that under the assumption $\sup_{0\leq t \leq T}|Z_t| \in L^2$, the optimal stopping problem \eqref{eq:OSPmain} is equivalent to \begin{equation}
    y_0 = \inf_{M \in \mathcal{M}^2_0} E \Big [\sup_{t \leq T} (Z_t - M_t ) \Big ],\label{DPsectionDP} 
\end{equation} where $\mathcal{M}_0^2$ denotes the space of $\mathbb{F}^{\mathbf{X}}-$martingales in $L^2$, starting from 0. Assuming that $\mathbb{F}^{\mathbf{X}}-$ is generated by a Brownian motion $W$, we can prove the following equivalent formulation of \eqref{DPsectionDP}. 
\begin{remark}
    From a financial modeling perspective, assuming that the filtration is generated by an $m$-dimensional Brownian motion $W$ is natural in the context of diffusive or rough volatility modeling. In this setting, $W$ represents the driver of $m/2$ assets (semimartingales) and their corresponding $m/2$ variance processes. 
\end{remark}
\begin{theorem}\label{mainresultDP}
    Assume that $\mathbb{F}^{\mathbf{X}}$ is generated by a $m$-dimensional Brownian motion $W$. Then for all $M\in \mathcal{M}_0^2$, there exist sequences $\ell^{i}=(\ell^i_n)_{n\in \mathbb{N}} \subset \mathcal{W}^{d+1}$ for $i=1,\dots,m$, such that \begin{equation*}
        \int_0^{\cdot}\langle \mathbf{X}_{0,s}^{<\infty},\ell_n\rangle ^{\top} dW_s:=\sum_{i=1}^m\int_0^{\cdot}\langle \mathbf{X}_{0,s}^{<\infty},\ell^i_n\rangle dW^i_s \rightarrow M_{\cdot} \text{ as } n \to \infty, \text{ ucp.}
    \end{equation*} In particular, the minimization problem \eqref{DPsectionDP} can be equivalently formulated as \begin{equation}
    \begin{aligned}
    y_0 & =  \inf_{\ell \in (\mathcal{W}^{d+1})^{m}} E \bigg[ \sup_{t \leq T}
    \bigg( Z_t - \int_0^t  \langle \Sig{X}_{0,s},\ell \rangle ^{\top} dW_s \bigg) \bigg] \\ & = \inf_{\ell^1,\dots,\ell^m \in \mathcal{W}^{d+1}} E \bigg[ \sup_{t \leq T}
    \bigg( Z_t - \sum_{i=1}^m\int_0^t \langle \Sig{X}_{0,s},\ell^i \rangle dW^i_s \bigg) \bigg].\label{minidpthm}
    \end{aligned}
  \end{equation}
\end{theorem}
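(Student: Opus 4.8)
The plan is to reduce the statement, via the Brownian martingale representation theorem, to the $\mathbb{H}^2$-approximation result Corollary~\ref{mainresultcoro}, and then to turn the resulting $\mathbb{H}^2$-convergence of integrands into uniform-in-time convergence of the corresponding stochastic integrals via Doob's maximal inequality and It\^o's isometry.

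First I would fix $M\in\mathcal{M}_0^2$. Since $\mathcal{F}^{\mathbf{X}}$ is (the usual augmentation of) the filtration generated by the $m$-dimensional Brownian motion $W$, the martingale representation theorem yields progressive processes $\xi^1,\dots,\xi^m$ with $M_t=\sum_{i=1}^m\int_0^t\xi^i_s\,dW^i_s$, and It\^o's isometry gives $\sum_{i=1}^m\|\xi^i\|_{\mathbb{H}^2}^2=E[M_T^2]<\infty$, so each $\xi^i\in\mathbb{H}^2$. Applying Corollary~\ref{mainresultcoro} (whose hypothesis on $\mathbf{X}$ is guaranteed by the framework of Section~\ref{sec:framework}) separately to each $\xi^i$ then produces words $l^i_n\in\mathcal{W}^{d+1}$ such that the robust-signature functionals $f^i_n(\mathbf{X}|_{[0,\cdot]})=\langle\mathbf{X}^{<\infty}_{0,\cdot},l^i_n\rangle$ (recalling the standing convention that $\mathbf{X}^{<\infty}$ denotes $\lambda(\mathbf{X}^{<\infty})$) satisfy $\|\xi^i-f^i_n(\mathbf{X}|_{[0,\cdot]})\|_{\mathbb{H}^2}\to 0$. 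I would note here that, because the tensor normalization $\lambda$ takes values in a bounded set, the process $s\mapsto\langle\mathbf{X}^{<\infty}_{0,s},l^i_n\rangle$ is bounded and adapted, hence lies in $\mathbb{H}^2$, so $\int_0^\cdot\langle\mathbf{X}^{<\infty}_{0,s},l^i_n\rangle\,dW^i_s$ is a genuine square-integrable martingale starting at $0$.

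Next I would set $N^n_t:=M_t-\sum_{i=1}^m\int_0^t\langle\mathbf{X}^{<\infty}_{0,s},l^i_n\rangle\,dW^i_s$, an $L^2$-martingale vanishing at $0$, and estimate, by Doob's maximal inequality and It\^o's isometry,
$$E\left[\sup_{t\leq T}|N^n_t|^2\right]\leq 4\,E\left[|N^n_T|^2\right]=4\sum_{i=1}^m\left\|\xi^i-\langle\mathbf{X}^{<\infty}_{0,\cdot},l^i_n\rangle\right\|_{\mathbb{H}^2}^2\xrightarrow{n\to\infty}0,$$
which in particular forces $\sup_{t\leq T}|N^n_t|\to 0$ in probability; since $[0,T]$ is compact this is exactly the asserted ucp convergence of $\int_0^\cdot\langle\mathbf{X}^{<\infty}_{0,s},l_n\rangle^\top\,dW_s$ to $M$. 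For the ``in particular'' part I would write $\mathcal{N}$ for the set of martingales $\int_0^\cdot\langle\mathbf{X}^{<\infty}_{0,s},l\rangle^\top\,dW_s$ with $l\in(\mathcal{W}^{d+1})^m$; since $\mathcal{N}\subseteq\mathcal{M}_0^2$, Rogers' representation~\eqref{DPsectionDP} gives ``$\geq$'' in~\eqref{minidpthm} immediately, while for ``$\leq$'' I would take an arbitrary $M\in\mathcal{M}_0^2$ with approximants $M^n\in\mathcal{N}$ as above and use $|\sup_{t\leq T}(Z_t-M^n_t)-\sup_{t\leq T}(Z_t-M_t)|\leq\sup_{t\leq T}|M^n_t-M_t|$ together with the $L^2$- (hence $L^1$-) convergence just established to get $E[\sup_{t\leq T}(Z_t-M^n_t)]\to E[\sup_{t\leq T}(Z_t-M_t)]$; thus $\inf_{N\in\mathcal{N}}E[\sup_{t\leq T}(Z_t-N_t)]\leq E[\sup_{t\leq T}(Z_t-M_t)]$, and minimizing over $M\in\mathcal{M}_0^2$ yields the reverse inequality.

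I do not expect a genuine obstacle: all the substantive analysis is already contained in Corollary~\ref{mainresultcoro}, and the remaining steps are classical stochastic-calculus estimates. The two points I would be careful about are that the martingale representation theorem is available for $\mathcal{F}^{\mathbf{X}}$ --- which is precisely the hypothesis that this filtration is Brownian --- and that the robust-signature functionals are admissible integrands, which is immediate from the boundedness built into the normalization $\lambda$ (and is, in fact, exactly why one works with the robust rather than the plain signature here).
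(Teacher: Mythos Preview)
Your proposal is correct and follows essentially the same route as the paper's proof: martingale representation, componentwise application of Corollary~\ref{mainresultcoro} to the integrands, Doob's maximal inequality plus It\^o isometry for the ucp convergence, and then the two-sided inequality for the infimum via the inclusion $\mathcal{N}\subseteq\mathcal{M}_0^2$ and convergence of expectations. Your write-up is in fact slightly more explicit in places (e.g., invoking the Lipschitz estimate $|\sup(Z_t-M^n_t)-\sup(Z_t-M_t)|\leq\sup|M^n_t-M_t|$ and recording why the robust-signature functionals are admissible integrands), but the argument is the same.
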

\begin{proof}
    By the Martingale Representation Theorem, see for instance Karatzas and Shreve \cite[Theorem 4.5]{karatzas1991brownian}, any  $\mathbb{F}^{\mathbf{X}}-$martingale can be
  represented as \begin{equation*}
      M_t = \int_0^t \alpha_s ^{\top} dW_s = \sum_{i=1}^m\int_0^t\alpha^{i}_sdW^i_s,
  \end{equation*}
  where $(\alpha_s)_{s \in [0,T]}$ is $\mathbb{F}^{\mathbf{X}}-$adapted, measurable and square integrable. Moreover, since $M\in \mathcal{M}_0^2$, it follows that $E [M_T^2] = E [\int_0^T |\alpha_t|^2 dt ] < \infty$. From \cite[Proposition 1.1.12]{karatzas1991brownian}, we know that any adapted and measurable process has a progressively measurable modification, which we again denote by $\alpha$. From Corollary \ref{mainresultcoro}, we know that there exist sequences $\ell^i=(\ell^i_n)_{n\in \mathbb{N}} \subset \mathcal{W}^{d+1}$ for $i=1,\dots,m$, such that for $\alpha^{i,n}_t:= \langle \mathbf{X}^{<\infty}_{0,t},\ell_n^i \rangle$, we have $\Vert \alpha^{i,n}-\alpha^i\Vert_{\mathbb{H}^2} \longrightarrow 0$ as $n \to \infty$. Using Doobs inequality, we in particular have \begin{align*}
      E \bigg [\bigg (\sup_{t\leq T}\int_0^t (\alpha^n_s-\alpha_s )^{\top}dW_s\bigg )^2\bigg] & \lesssim\sum_{i=1}^m E\bigg [\int_0^T|\alpha^{i,n}_s-\alpha^i_s|^2dt\bigg ] \\ & = \sum_{i=1}^m \Vert \alpha^{i,n}-\alpha^i \Vert_{\mathbb{H}^2}^2 \longrightarrow 0.
  \end{align*} But this readily implies the first claim, that is \begin{equation*}
        \int_0^{\cdot}\langle \mathbf{X}_{0,s}^{<\infty},\ell_n\rangle ^{\top} dW_s \longrightarrow \int_0^{\cdot}\alpha_sdW_s = M_{\cdot} \text{ ucp.}
    \end{equation*} In order to show \eqref{minidpthm}, since $\int_0^{\cdot}\langle \mathbf{X}_{0,s}^{<\infty},\ell_n\rangle ^{\top} dW_s$ are clearly $\mathbb{F}^{\mathbf{X}}$-martingales, we can notice that \begin{align*}
        \inf_{\ell \in (\mathcal{W}^{d+1})^m} E \bigg[ \sup_{t \leq T}
    \bigg( Z_t - \int_0^t  \langle \Sig{X}_{0,s},\ell \rangle ^{\top} dW_s \bigg) \bigg]  & \geq \inf_{M \in \mathcal{M}_0^2} E \Big[ \sup_{t \leq T}
    ( Z_t - M_t) \Big] \\ & = y_0.
    \end{align*} On the other hand, for any fixed martingale $M$, we know there exist sequences of words $\ell^i=(\ell^i_n)_{n\in \mathbb{N}} \subset \mathcal{W}^{d+1}$ such that  \begin{equation*}
        \lim_{n \to \infty}\sup_{t \leq T}
    \bigg( Z_t - \int_0^t \langle \mathbf{X}^{<\infty}_{0,s},\ell_n \rangle ^{\top} dW_s\bigg ) = \sup_{t \leq T}
    (Z_t - M_t) \text{ in } L^2.
    \end{equation*} Therefore \begin{align*}
        E \Big [ \sup_{t\leq T}(Z_t-M_t)\Big] & = \lim_{n\to \infty} E\bigg[\sup_{t \leq T}
    \bigg( Z_t - \int_0^t \langle \mathbf{X}^{<\infty}_{0,s},\ell_n \rangle^{\top}dW_s\bigg ) \bigg] \\ & \geq \inf_{l \in (\mathcal{W}^{d+1})^m} E \bigg[ \sup_{t \leq T}
    \bigg( Z_t -   \int_0^t  \langle \Sig{X}_{0,s},\ell \rangle ^{\top} dW_s \bigg) \bigg].
    \end{align*} Taking the infimum over all $M\in \mathcal{M}_0^2$ yields the claim.
\end{proof} Next, similar to the primal case, we translate the minimization problem \eqref{minidpthm} into a finite-dimensional optimization problem, by discretizing the interval $[0,T]$ and truncating the signature to some level $K$. More precisely, for $0=t_0<\dots <t_N=T$ and some $K \in \mathbb{N}$, we reduce the minimization problem \eqref{minidpthm} to \begin{equation}
    y_0^{K,N} =\inf_{\ell \in (\mathcal{W}_{\leq K}^{d+1})^m}E\Big[\max_{0\leq n \leq N}(Z_{t_n}-M^{\ell}_{t_n})\Big],\label{approx1}
\end{equation} where for any $\ell=(\ell^1,\dots,\ell^m) \in  (\mathcal{W}_{\leq K}^{d+1})^m$ we define $$M^{l}_t=\int_0^t \langle \mathbf{X}^{\leq K}_{0,s},\ell \rangle ^{\top}
  dW_s = \sum_{i=1}^m\int_0^t \langle \mathbf{X}^{\leq K}_{0,s},\ell^i \rangle
  dW^i_s.$$ 
The discrete version of the dual formulation \eqref{DPsectionDP} is given by \begin{equation*}
    y_0^N =\inf_{M\in \mathcal{M}^{2,N}_0}E\Big[\max_{0\leq n \leq N} (Z_{t_n}-M_{t_n})\Big],
\end{equation*} where $\mathcal{M}^{2,N}_0$ denotes the space of discrete $\mathbb{F}^{\mathbf{X},N}-$martingales. The following result shows that the minimization problem \eqref{approx1} has a solution and the optimal value converges to $y_0^N$ as the level of the signature goes to infinity, the proof can be found in Appendix \ref{appendixSAA}.
\begin{proposition}\label{propositionapproxi}
    There exists a minimizer $l^{\star}$ to \eqref{approx1} and \begin{equation*}
    |y_0^N-y_0^{K,N}| \longrightarrow 0 \quad \text{as } K \rightarrow \infty.
    \end{equation*}
\end{proposition}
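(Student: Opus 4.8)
The plan is to establish the two assertions of Proposition~\ref{propositionapproxi} separately: first existence of a minimizer for the finite-dimensional problem \eqref{approx1}, then the convergence $y_0^{K,N}\to y_0^N$ as $K\to\infty$.

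\textbf{Existence.} I would identify $\big(\mathcal{W}_{\le K}^{d+1}\big)^m$ with a finite-dimensional Euclidean space through the word-coefficients of degree $\le K$, and analyse the objective $F(l):=E\big[\max_{0\le n\le N}(Z_{t_n}-M^l_{t_n})\big]$. Since $l\mapsto M^l_{t_n}$ is linear, a maximum of affine maps followed by an expectation is convex, so $F$ is convex. Using that the robust signature is bounded (Section~\ref{sec:L2approxsection}), the Itô isometry shows $l\mapsto M^l_{t_n}$ is a bounded linear map into $L^2(\Omega)$, and from $|F(l)-F(l')|\le E[\max_n|M^l_{t_n}-M^{l'}_{t_n}|]$ one gets that $F$ is (locally Lipschitz) continuous. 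It then suffices to prove that $F$ is coercive on the finite-dimensional image $V:=\{M^l:\, l\in(\mathcal{W}_{\le K}^{d+1})^m\}\subseteq\mathcal{M}^{2,N}_0$, after which a continuous coercive function on a finite-dimensional space attains its minimum, and we pull a minimizing $M^{l^\star}$ back to an $l^\star$. Coercivity follows from the pointwise inequality $\max_n(Z_{t_n}-M_{t_n})\ge\min_n Z_{t_n}+\max_n(-M_{t_n})$: taking expectations, $F\ge -E[\sup_{t\le T}|Z_t|]+E[\max_n(-M_{t_n})]$, and on $V$ the functional $M\mapsto E[\max_n(-M_{t_n})]$ is continuous, positively homogeneous, and strictly positive off $0$ (if $M_{t_N}\not\equiv 0$ then, $M$ being a martingale started at $0$, the $-M_{t_n}$ cannot all be a.s.\ non-positive), hence bounded below by $c_0\|M_{t_N}\|_{L^2}$ for some $c_0>0$ by compactness of the unit sphere of $V$ — noting $\|M_{t_N}\|_{L^2}$ is a genuine norm on $V$ since $M_{t_N}=0$ forces $M\equiv 0$ on the grid.

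\textbf{Convergence.} Since $M^l\in\mathcal{M}^{2,N}_0$ for every admissible $l$, and $\mathcal{W}_{\le K}^{d+1}\subseteq\mathcal{W}_{\le K+1}^{d+1}$, the sequence $y_0^{K,N}$ is non-increasing and $y_0^{K,N}\ge y_0^N$, so it suffices to show $\limsup_K y_0^{K,N}\le y_0^N$. Fix $\varepsilon>0$ and choose $M\in\mathcal{M}^{2,N}_0$ with $E[\max_n(Z_{t_n}-M_{t_n})]\le y_0^N+\varepsilon$. Under the standing assumption that $\mathcal{F}^{\mathbf{X}}$ is Brownian (as in Theorem~\ref{mainresultDP}), the process $\tilde M_t:=E[M_{t_N}\mid\mathcal{F}^{\mathbf{X}}_t]$ has a continuous modification, lies in $\mathcal{M}^2_0$, and satisfies $\tilde M_{t_n}=M_{t_n}$ for all $n$ because $M$ is a discrete martingale. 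Applying Theorem~\ref{mainresultDP} to $\tilde M$ (equivalently, Corollary~\ref{mainresultcoro} together with Doob's $L^2$ inequality, exactly as in the proof of Theorem~\ref{mainresultDP}) yields $l_k\in(\mathcal{W}^{d+1})^m$ with $\sup_{t\le T}\big|\int_0^t\langle\mathbf{X}^{<\infty}_{0,s},l_k\rangle^\top dW_s-\tilde M_t\big|\to 0$ in $L^2$, hence $E\big[\max_n\big(Z_{t_n}-\int_0^{t_n}\langle\mathbf{X}^{<\infty}_{0,s},l_k\rangle^\top dW_s\big)\big]\to E[\max_n(Z_{t_n}-\tilde M_{t_n})]\le y_0^N+\varepsilon$. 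Each $l_k$ has some finite degree $K_k$, and for $K\ge K_k$ we have $l_k\in(\mathcal{W}_{\le K}^{d+1})^m$ with $\int_0^{\cdot}\langle\mathbf{X}^{<\infty}_{0,s},l_k\rangle^\top dW_s=M^{l_k}_\cdot$, so $y_0^{K,N}\le E[\max_n(Z_{t_n}-M^{l_k}_{t_n})]$; letting $K\to\infty$ and then $k\to\infty$ gives $\limsup_K y_0^{K,N}\le y_0^N+\varepsilon$, and $\varepsilon\downarrow 0$ concludes.

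\textbf{Main obstacle.} The delicate point is the existence of the minimizer: a convex objective need not attain its infimum on a non-compact set, so everything hinges on genuine coercivity in the ``observable'' directions, i.e.\ that $\|M^l_{t_N}\|_{L^2}\to\infty$ forces $F(l)\to\infty$ — which is precisely where the martingale property (ruling out $-M^l$ being uniformly non-positive) and finite-dimensionality of $V$ are used; handling a possibly non-trivial kernel of $l\mapsto M^l$ is absorbed by working on $V$ (or on a complement of the kernel). The secondary technical care is the reduction of the discrete near-optimal martingale to a continuous one so that Theorem~\ref{mainresultDP} applies verbatim.
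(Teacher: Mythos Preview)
Your proposal is correct and follows essentially the same two-step structure as the paper: existence via convexity plus coercivity, and convergence by extending a (near-)optimal discrete martingale to a continuous $(\mathcal{F}^{\mathbf{X}}_t)$-martingale via $\tilde M_t=E[M_{t_N}\mid\mathcal{F}^{\mathbf{X}}_t]$ and then invoking Theorem~\ref{mainresultDP}.

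The one noteworthy difference is in the coercivity argument. The paper (Lemma~\ref{existencelemma}) works directly on the coefficient space: it bounds the objective from below by $\tfrac12 E[|M^l_T|]$, homogenizes to the sphere $\{|l|=1\}$, and then argues that the infimum there is strictly positive because $\langle\mathbf{X}^{<\infty}_{0,s},l\rangle=0$ a.e.\ a.s.\ forces $l=0$ --- i.e.\ it relies on the map $l\mapsto M^l$ having trivial kernel. You instead pass to the finite-dimensional image $V=\{M^l\}$, prove coercivity there via $E[\max_n(-M_{t_n})]>0$ for nonzero martingales started at $0$, and pull the minimizer back; this sidesteps the injectivity question entirely. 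Your route is arguably more self-contained, while the paper's buys coercivity in the original parameter $l$ (useful later when analysing the SAA problem). For the convergence step the arguments are the same, except that the paper uses the exact Doob martingale rather than an $\varepsilon$-near-optimal one --- an inessential difference.
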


\begin{remark}
   In a financial context, Proposition \ref{THM:LSconvergence1} and \ref{propositionapproxi} tell us that $y_0^{K,N}$ converges to the Bermudda option price as $K \to \infty$. Moreover, we can use the triangle inequality to find \begin{equation*}
       |y_0-y_0^{K,N}| \leq |y_0-y_0^{N}| + |y_0^N-y_0^{K,N}|,
   \end{equation*} and hence the finite-dimensional approximations converge to $y_0$ as $K,N \to \infty$, whenever the Bermuddan price converges to the American price. For our numerical examples, we will always approximate $y_0^N$ for some fixed $N$, and therefore we do not further investigate in the latter convergence here.
\end{remark}
\subsubsection{Sample average approximation (SAA)}\label{sec:SAAsection}
We now present a method to approximate the value $y_0^{K,N}$ in \eqref{approx1}, using Monte-Carlo simulations. This procedure is called \textit{sample average approximation} (SAA) and we refer to Shapiro \cite[Chapter 6]{shapiro2003monte} for a general and extensive study of this method. Similar to the primal case, we introduce two further approximation steps: First, let $0=s_0<\dots < s_J = T$ be a finer discretization of $[0,T]$ and denote by $M^{l,J}$, resp. $\Sig{X}(J)$, an approximation of the stochastic integral $\int \langle \Sig{X}_{0,s},\ell \rangle dW_s$, resp. the signature $\Sig{X}$, using an Euler-scheme. Secondly, we consider $i=1,\dots, M$ i.i.d. sample paths $Z^{(i)},M^{(i),l,J}$, and replace the expectation in \eqref{approx1} by a sample average, leading to the following empirical minimization problem \begin{equation}
    y_0^{K, N, J, M} = \inf_{l \in  (\mathcal{W}_{\leq K}^{d+1})^m} \frac{1}{M}
    \sum_{i = 1}^M \max_{0 \leq n \leq N} ( Z^{(i)}_{t_n} -M^{(i),l,J}_{t_n}). \label{finalminimizationapprox}
  \end{equation}
The following result can be deduced from Shapiro \cite[Chapter 6 Theorem 4]{shapiro2003monte}, combined with Proposition \ref{propositionapproxi}, we refer to Appendix \ref{appendixSAA} for the details.
\begin{proposition}\label{THM:SAAapproxi2}
    For $M$ large enough there exists a minimizer $\beta^{\star}$ to \eqref{finalminimizationapprox} and \begin{equation*}
        \lim_{K \to \infty}\lim_{J \to \infty}\lim_{M\to \infty}y_0^{K,N,J,M}=y_0^N,
    \end{equation*} where the convergence with respect to $M$ is almost sure convergence.
\end{proposition}
\begin{remark}\label{linearprogramm} Let us quickly describe how we will solve \eqref{finalminimizationapprox} numerically: Consider the number $D:= \sum_{k=0}^K(d+1)^k$, which corresponds to the number of entries of the $K$-step signature. Notice that for any element $l \in \mathcal{W}^{d+1}_{\leq K}$, we have the following representation $\ell= \lambda_1w_1+ \dots +\lambda_Dw_D$, where $w_1,\dots,w_D$ are all possible words of length at most $K$. Since $\langle \mathbf{X}^{\leq K}_{0,t},\ell \rangle = \sum_{r=1}^D\lambda_r\langle \mathbf{X}^{\leq K}_{0,t},w_r \rangle$, the minimization \eqref{finalminimizationapprox} has equivalent formulation \begin{equation*}
    y_0^{K, N, J, M} = \inf_{\lambda \in  (\R^D)^m} \frac{1}{M}
    \sum_{i = 1}^M \max_{0 \leq n \leq N} ( Z^{(i)}_{t_n} -\sum_{r=1}^D\lambda_rM^{(i),w_r,J}_{t_n}).
  \end{equation*}
    As described in Desai et al. \cite{desai2012pathwise}, the latter minimization problem is equivalent to the following
  linear program
  \begin{equation*}
    \min_{x \in \mathbb{R}^{M + D}} \frac{1}{M} \sum_{j = 1}^M x_j, \quad
    \tmop{subject} \tmop{to} \tmop{Ax} \geq b,
  \end{equation*}
  where $A \in \mathbb{R }^{M (N+1)\times (M + D)}$ with $A = [A^1, \ldots,
  A^M]^T$ and $Ax \geq b$ represents the constraints \begin{equation*}
   x_i \geq Z^{(i)}_{t_n}-\sum_{r=1}^DM_{t_n}^{(i),w_r,J}, \quad \begin{aligned} & i = 1,\dots,M \\ &n = 0,\dots,N \end{aligned}.
\end{equation*}
\end{remark}
\begin{remark} A solution $l^{\star}$ to \eqref{finalminimizationapprox} yields the $\mathbb{F}^{\mathbf{X},N}-$martingale $M^{l^{\star}}$, and by resimulating $\tilde{M}$ i.i.d samples of $Z$, the Brownian motion $W$ and the signature $\Sig{X}$, we can notice that the resulting estimator $y_0^{K,N,J,\tilde{M}}$ is upper-biased, that is $y_0^{K,N,J,\tilde{M}} \geq y_0^N$, since the latter is defined by taking the infimum over all $\mathbb{F}^{\mathbf{X},N}-$martingales.
\end{remark}

\section{Numerical examples}\label{sec:Numerics}
In this section we study two non-Markovian optimal stopping problems and test our methods to approximate lower and upper bounds for the optimal stopping value. The details for the implementations and examples can be found here \url{https://github.com/lucapelizzari/Optimal_Stopping_with_signatures}.

\begin{remark}
    In all the performed numerical experiments below, we did not observe any significant difference when using the normalized signature (choosing the same normalization the authors introduce in Chevyrev and Oberhauser \cite[Section 3]{chevyrev2018signature}), and when using the standard, non-normalized signature. Since the tensor normalization increases the complexity of the algorithms, all the results presented here were obtained using the standard signature.
\end{remark}

\subsection{Optimal stopping of fractional Brownian motion}\label{sec: fBM}
We start with the task of optimally stopping a fractional Brownian motion (fBm), which represents the canonical choice of a framework leaving the Markov and semimartingale regimes. Recall that a fBm with Hurst parameter $H\in (0,1)$ is the unique, continuous Gaussian process $(X^H_t)_{t\in [0,T]}$, with \begin{align*}
E[X_t^H] & = 0, \quad \forall t \geq 0, \\ E[X_s^HX_t^H] & = \frac{1}{2} \big (|s|^{2H}+|t|^{2H}-|t-s|^{2H}\big ), \quad \forall s, t \geq 0. 
\end{align*} see for instance Friz and Hairer \cite[Chapter 9]{friz2020course} for more details. We wish to approximate the value \begin{equation}\label{eq:OSFBM}
y_0^H= \sup_{\tau \in \mathcal{S}_0}E[X_{\tau}^H], \quad H\in (0,1),
\end{equation} from below and above. This example has already been studied in Becker et al. \cite[Section 4.3]{becker2019deep} as well as in Bayer et al. \cite[Section 8.1]{bayer2021optimal}, and we compare the results below.

Since $X^H$ is one-dimensional and $\alpha$-Hölder continuous for any $\alpha < H$, its (scalar) rough-path lift is given by \begin{equation*} \bigg (1,X^H_{s,t},\frac{1}{2}(X_{s,t}^H)^2,\dots,\frac{1}{L!}(X_{s,t}^{H})^{L}\bigg ) \in \mathscr{C}^{\alpha}_g([0,T],\mathbb{R}),
\end{equation*} where $L=\lfloor \frac{1}{\alpha}\rfloor$. We can extend it to a geometric rough-path lift $\geoRP{X}^H \in \widehat{\mathscr{C}}^{\alpha}_g([0,T],\mathbb{R}^2)$ of the time-augmentation $(t,X^H_t)$, as for instance described in \cite[Example 2.4]{bayer2021optimal}. To numerically solve \eqref{eq:OSFBM}, we replace the continuous-time interval $[0,T]$ by some finite grid points $0=t_0<t_1<\dots<t_N=T$. Below we compare our results with \cite[Section 4.3]{becker2019deep}, where the authors chose $N=100$. Before doing so, an important remark about the difference of our problem formulation is in order. \begin{remark}\label{rmk:BCJ}In \cite{becker2019deep} the authors lift $X^H$ to a $100$-dimensional Markov process of the form $\widehat{X}_{t_k} = (X^H_{t_k},\dots, X^H_{t_1},0,\dots,0) \in \R^{100}$, and they consider the corresponding discrete (!) filtration $\widehat{\mathcal{F}}_k=\sigma(X^H_{t_k},\dots,X^H_{t_1}), k=0,\dots,100$. Notice that this differs from our setting, as we consider the bigger filtration $\mathcal{F}_k=\sigma(X^H_{s}:s\leq t_k)$, see Section \ref{sec: LSsection} and \ref{sec:DPsignaturesection}, that contains the whole past of $X^H$, not only the information at the past exercise-dates. Thus, in general $y_0^H$ dominates the lower-bounds from \cite{becker2019deep}, simply because our filtration contains more stopping-times. Similarly, the (very sharp) upper-bounds in \cite{becker2019deep} were obtained using a nested Monte-Carlo approach, which constructs $(\widehat{\mathcal{F}}_k)$-martingales that are not martingales in our filtration, and thus their upper-bounds are not necessarily upper-bounds for \eqref{eq:OSFBM}. 
\end{remark} 
\begin{table}\centering
\ra{1.3}
\begin{tabular}{@{}rrrrc@{}}\toprule 
$H$ &  $J = 100$ & $J = 500$ & Becker et al. \cite{becker2019deep} \\ \midrule
0.01 & [1.518,1.645] & [1.545,1.631]  & [1.517,1.52] \\
0.05 & [1.293,1.396] &[1.318,1.382]   & [1.292,1.294] \\
0.1 &[1.045,1.129]& [1.065,1.117]&  [1.048,1.05]\\
0.15 &  [0.83,0.901] &[0.847,0.895]  & [0.838,0.84] \\
0.2 & [0.654,0.706] & [0.663,0.698] & [0.657,0.659]\\
0.25 &[0.507,0.538] &[0.510,0.533]  & [0.501,0.505]\\
0.3 & [0.363,0.396] &[0.371,0.392] &   [0.368,0.371] \\
0.35 & [0.248,0.272] & [0.255,0.270] &[0.254,0.257]\\
0.4 &  [0.153,0.168] & [0.155,0.165]  & [0.154,0.158]\\
0.45 & [0.069,0.077] &[0.068,0.076]   & [0.066,0.075] \\
0.5 & [-0.001,0]&  [-0.002,0] &   [0,0.005]\\
0.55 & [0.061,0.071]& [0.060,0.066] & [0.057,0.065]\\
0.6 & [0.112,0.133] & [0.112,0.124]  & [0.115,0.119] \\
0.65 & [0.163,0.187]& [0.163,0.175]& [0.163,0.166]\\
0.7 & [0.203,0.234]& [0.205,0.220]& [0.206,0.208]\\
0.75 & [0.242,0.273] & [0.240,0.260]   & [0,242,0.245]\\
0.8 & [0.275,0.306] & [0.281,0.298]&  [0.276,0.279] \\
0.85 &[0.306,0.335]& [0.301,0.324] & [0.307,0.31]\\ 
0.9 & [0.331,0.357]& [0.337,0.356]&  [0.335,0.339]\\
0.95 & [0.367,0.381]&  [0.366,0.381]&  [0.365,0.367] \\ \bottomrule

\end{tabular}
\caption{Intervals for optimal stopping of fBm $H \mapsto y_0^H$ with $N=100$ exercise-dates, and discretization $J=100$ (left column), $J=500$ (middle column), and intervals from \cite{becker2019deep} (right column). Overall Monte-Carlo error is below $0.003$.}\label{FBMtable}
\end{table}
In Table \ref{FBMtable} we present intervals for the optimal stopping values $y_0^H$ for Hurst parameters $H\in \{0.01,0.05,\dots,0.95 \}$, where the lower-bounds, resp. the upper-bounds, were approximated using Longstaff-Schwartz with signatures, resp. the SAA approach described in Section \ref{sec: OptimalStoppingSection}. We truncate the signature at level  $K=6$, and apply the primal approach using $M=10^6$ samples for both the regression and the resimulation, and for the dual approach we choose $M=15000$ to solve the linear programm from Remark \ref{linearprogramm}, and resimulate with $M = 10^5$ samples. In the first column, we choose the time-discretization for the signature equal to the number of exercise-dates, by $J=N=100$. While the lower-bounds are very close, our upper-bounds exceed the ones from \cite{becker2019deep}. This observation matches with the comments made in Remark \ref{rmk:BCJ}, as we consider the filtration $\widehat{\mathcal{F}}$ in this case for the lower-bounds, but our upper-bounds are by construction upper-bounds for the continuous problem with filtration $\mathcal{F}$, and the continuous martingale is approximated only at the exercise-dates. By increasing the discretization to $J=500$, and thereby adding information to the filtration in-between exercise-dates, for small $H$ ($\leq 0.2$), one can see that the lower-bounds exceed the intervals from \cite{becker2019deep}, showing that even for $N=100$ points in $[0,1]$, the information in-between exercise-dates is relevant for optimally stopping the fBm. 

\subsection{American options in rough volatility models}\label{sec:AmericanOptions}

The second example we present is the problem of pricing American options in rough volatility models. More precisely, we consider the one-dimensional asset-price model \begin{equation*}
X_0=x_0, \quad dX_t = rX_tdt+X_tv_t (\rho dW_r+\sqrt{1-\rho^2}dB_t ), \quad 0 < t \leq T,
\end{equation*} where $W$ and $B$ are two independent Brownian motions, the volatility $(v_t)_{t\in[0,T]}$ is an $\mathbb{F}^W-$adapted, continuous process, $\rho \in [-1,1]$ and $r>0$ the interest rate. Now for any payoff function $\phi: [0,T] \times \mathbb{R} \rightarrow \mathbb{R}$, we want to approximate the optimal stopping problem \begin{equation}\label{eq:AmericanOption}
y_0 = \sup_{\tau \in \mathcal{S}_0}E[e^{-r\tau}\phi(\tau,X_\tau)],
\end{equation} where $\mathcal{S}_0$ is the set of $\mathbb{F}:=\mathbb{F}^W \lor \mathbb{F}^B-$stopping times on $[0,T]$. It is worth to note that our method does not depend on the specification of $v$, and as soon as we can sample from $(X,v)$, we can apply it to approximate values of American options.

In the following we focus on the rough Bergomi model Bayer et al. \cite{bayer2016pricing}, that is we specify the volatility as \begin{equation*}
    v_t=\xi_0\mathcal{E}\bigg (\eta \int_0^t(t-s)^{H-\frac{1}{2}}dW_s \bigg ),
\end{equation*} where $\mathcal{E}$ denotes the stochastic exponential, and we will consider the parameters $x_0 = 100, r=0.05, \eta = 1.9, \rho = -0.9, \xi_0= 0.09$. Moreover, we consider put options $\phi(t,x) = (K-x)^{+}$ for different strikes $K\in \{70,80,\dots,120\}$, with maturity $T=1$ and $N=12$ exercise-dates. Thus we write \eqref{eq:AmericanOption} as discrete optimal stopping problem \begin{equation*}
y_0^{N}=\sup_{\tau \in \mathcal{S}^N_0}E \big [e^{-r\tau} (K-X_{\tau} )^{+}\big],
\end{equation*} where $\mathcal{S}_0^N$ is described in the beginning of Section \ref{sec: LSsection}. Moreover, for some finer grid $s_0=0<s_1 < \dots < s_J=1, J\in \N$ and fixed signature level $K$ and sample size $M$, we denote by $y_0^{\text{LS}}$ the value $y_0^{K,N,J,M}$ defined in Proposition \ref{THM:LSconvergence2}, resp. by $y_0^{\text{SAA}}$ the value $y_0^{K,N,J,M}$ defined in \eqref{finalminimizationapprox}. We compare our results with the lower-bounds obtained in Bayer et al. \cite{bayer2020pricing} for $H=0.07$, resp. in Goudenege et al. \cite{goudenege2020machine} with $H=0.07$ and $H=0.8$.

In the first numerical experiments we will again simply consider the signature of the time-augmented path $(t,X_t)$, that is  we choose the basis functions for the least square regression \eqref{eq:orthogonalprojections}, resp. for the SAA minimization problem in \eqref{approx1} to be \begin{equation*}\begin{aligned}
& \mathbf{(B_1)} \quad \{\langle \mathbf{X}_{0,t}^{<\infty},\ell \rangle : \ell \in \mathcal{W}^{2}_{\leq K} \}, \quad K\in \N
\end{aligned}\end{equation*}In the first column of Table \ref{tb:48,0.07}-\ref{tb:600,0.07} we report the price intervals $[y_0^{\text{LS}},y_0^{\text{SAA}}]$ for the Hurst parameter $H=0.07$, and different discretizations $J=48$ and $J=600$. The degree of the signature is fixed at $K=4$ , and we apply the primal algorithm described in Section \ref{sec: LSsection} for $M= 10^6$ samples. For the obtained stopping policies, we resimulate with again $M=10^6$ samples to obtain true-lower bounds $y_0^{\text{LS}}$. For the upper-bounds we solve the linear program described in Section \ref{sec:DPsignaturesection} for $M= 10^4$ samples, and then resimulate with $M=10^5$ samples to obtain true upper-bounds $y_0^{\text{SAA}}$. In the first column of Table \ref{tb:48,0.8} we consider the same problem for $H=0.8$ and $J=600$. Similar as in Section \ref{sec: fBM}, we observe that the price intervals shrink when we increase the number of discretization points between exercise-dates. However, even for $J=600$ we still observe a significant gap between lower and upper-bounds. It was already observed in Markovian frameworks, that the approximation of the Doob-martingale usually requires a careful and specific choice of basis functions (e.g. in Belomestny et al. \cite{belomestny2009true} the authors use European deltas). This motivates us to extend the basis $(\mathbf{B_1})$ in two ways: First, we add Laguerre polynomials of the states $(X_t,v_t)$, which would be a natural choice in a Markovian framework, for instance used in the original work by Longstaff and Schwartz in \cite{longstaff2001valuing}. Additionally, for the dual-problem we add the payoff process $Z_t$ to the path, that is we lift $(t,X_t,Z_t)$ to the signature $\mathbf{Z}^{<\infty}$. Since $(t,X_t,Z_t)$ is a semimartingale, the signature $\Sig{Z}$ is given as the sequence of iterated Stratonovich integrals as explained in Friz et al. \cite{friz2022unified}. Of course adding basis functions in the primal and dual approach does not change the convergence. To summarize, for the least squares regression \eqref{eq:orthogonalprojections}, resp. for the SAA minimization problem in \eqref{approx1}, we use the extended basis $(\mathbf{B}_2) =\{ (\mathbf{P}_2),(\mathbf{D}_2)\}$ of the form \begin{equation*}
\begin{aligned}
& \mathbf{(P_2)} \quad \big \{L_i(X_t,v_t),\langle \mathbf{X}_{0,t}^{<\infty},\ell \rangle : i = 1,\dots,m_p, \ell \in \mathcal{W}^{d+1}_{\leq K_p}\big\}, \quad m_p,K_p\in \N,\\ & \mathbf{(D_2)} \quad \big \{L_i(X_t,v_t),\langle \mathbf{Z}_{0,t}^{<\infty},\ell \rangle: i = 1,\dots,m_d, \ell \in \mathcal{W}^{d+1}_{\leq K_d}\big\}, \quad m_d,K_d\in \N,
\end{aligned}
\end{equation*} where $(L_k)_{k\geq 0}$ are Laguerre polynomials. In the second columns of Table \ref{tb:48,0.07}-\ref{tb:48,0.8}  we report the price-intervals for the extended Basis $\mathbf{(B_2)}$. We consider polynomials degree $3$ for the primal, and $5$ for the dual approach, and the signature levels and number of samples are the same as before. Especially for $J=600$ we observe a significant reduction of the upper-bounds, which are now only $2\%-3\%$ higher than the lower-bounds. We expect these margins to shrink more when either further increasing all parameters or when choosing non-linear basis functions, such as for instance deep neural networks on the log-signature or signature-kernel based methods. These two ideas are currently under development and more details will appear in future works.

\begin{table*}\centering
\ra{1.3}
\begin{tabular}{@{}rrrrrc@{}}\toprule 
$K$ & \text{$(\mathbf{B_1})$-Basis} & \text{$(\mathbf{B_2})$-Basis} & \textit{\cite{bayer2020pricing}} & \textit{ \cite{goudenege2020machine}} \\ \midrule
   $70$ & (1.83, 2.59) & (1.85, 2.04)  & 1.88& 1.88 & \\ $80$ & (3.18, 4.44) & (3.18, 3.44)  & 3.22 & 3.25 & \\ $90$ & (5.19, 7.66) & (5.25, 5.68) & 5.30& 5.34 & \\$100$ & (8.33, 13.16) & (8.44, 9.13) & 8.50& 8.53 & \\$110$ & (13.02, 21.38) & (13.18, 14.18) & 13.23& 13.28 & \\$120$ & (20.20, 30.91) & (20.22, 21.40) & 20& 20.20 & \\\bottomrule 
\end{tabular}
\caption{Put option price intervals with $J = 48$ and $H=0.07$, for the two choices of Basis functions described in $(\mathbf{B_1})$, resp. $(\mathbf{B_2})$, and lower-bound reference values.}\label{tb:48,0.07}
\end{table*}

\begin{table*}\centering
\ra{1.3}
\begin{tabular}{@{}rrrrrc@{}}\toprule 
$K$ & \text{$(\mathbf{B_1})$-Basis} & \text{$(\mathbf{B_2})$-Basis} & \textit{\cite{bayer2020pricing}} & \textit{ \cite{goudenege2020machine}} \\ \midrule
   $70$ & (1.90, 2.38) & (1.92, 1.99)  & 1.88& 1.88 & \\ $80$ & (3.25, 4.13) & (3.27, 3.37) & 3.22 & 3.25 & \\ $90$ & (5.34, 7.17) &  (5.37, 5.49) & 5.30& 5.34 & \\$100$ & (8.51, 12.55) &  (8.57, 8.77) & 8.50& 8.53 & \\$110$ & (13.24, 20.79) &  (13.29, 13.59) & 13.23& 13.28 & \\$120$ & (20.22, 29.90) &  (20.24, 20.66) & 20& 20.20 & \\\bottomrule 
\end{tabular}
\caption{Put option price intervals with $J = 600$ and $H=0.07$, for the two choices of Basis functions described in $(\mathbf{B_1})$, resp. $(\mathbf{B_2})$, and lower-bound reference values. }\label{tb:600,0.07}
\end{table*}

\begin{table*}\centering
\ra{1.3}
\begin{tabular}{@{}rrrrc@{}}\toprule 
$K$ & \text{$(\mathbf{B_1})$-Basis} & \text{$(\mathbf{B_2})$-Basis} & \textit{\cite{goudenege2020machine}} \\ \midrule
   $70$ &  (1.83, 2.39)& (1.83, 1.90) & 1.84& \\ $80$ &  (3.08, 4.13)& (3.08, 3.19)  &  3.10 & \\ $90$ & (5.04, 7.38) & (5.07, 5.17) & 5.08 & \\$100$ &  (8.11, 12.84) & (8.15, 8.27)  & 8.19 & \\$110$ &  (12.89, 20.77) & (12.97, 13.09)  & 13.00 & \\$120$ &  (20.16, 30.21)& (20.21, 20.51) & 20.28 & \\\bottomrule 
\end{tabular}
\caption{Put option price intervals with $J = 600$ and $H=0.8$, for the two choices of Basis functions described in $\mathbf{P_1,D_1}$, resp. $\mathbf{P_2,D_2}$, and lower-bound reference value. }\label{tb:48,0.8}
\end{table*}


%
%


\appendix\normalsize

\section{Technical details Section \ref{sec: OptimalStoppingSection}}
\subsection{Proofs in Section \ref{sec: LSsection}}\label{appendixLS}
\begin{proof}{\textbf{of Proposition \ref{THM:LSconvergence1}}}
     The proof is based on the same ideas as the proof in \cite[Theorem 3.1]{clement2002analysis}. We can proceed by induction over $n$. For $n=N$ the claim trivially holds true, and assume it holds for $0 \leq n+1\leq N-1$. Let us recall from \eqref{eq:orthogonalprojections} that for fixed $K$ and $n$, we define $\psi^{n,K}(\mathbf{x})=\langle \mathbf{x}^{\leq K},l^{\star,n,K} \rangle$ where $$l^*:=l^{*,n,K} = \underset{l \in \mathcal{W}_{\leq K}^{d+1}}{\operatorname{argmin}} \left \Vert Z_{\tau^K_{n+1}}-\langle \mathbf{X}_{0,t_n}^{\leq K},l \rangle \right \Vert_{L^2}.$$ In particular, by the Hilbert Projection Theorem, $\psi^{n,K}(\mathbf{X}|_{[0,t]})$ is the orthogonal projection of $Z_{\tau^K_{n+1}}$ onto the subspace $\{\langle \mathbf{X}^{\leq K}_{0,t_n},l \rangle : l\in \mathcal{W}^{d+1}\}$ of $L^2$. Define the events $$A(n) := \left \{ Z_{t_n} \geq E\left [Z_{\tau_{n+1}}|\mathcal{F}^{\geoRP{X}}_{t_n}\right ] \right \} \quad \text{and} \quad A(n,K) := \left \{ Z_{t_n} \geq \psi^{n,K}(\geoRP{X}|_{[0,t_n]})  \right \}.$$ By definition we can write $$\tau_n^K = t_n1_{A(n,K)}+\tau_{n+1}^K1_{A(n,K)^C}, \quad \tau_n = t_n1_{A(n)}+\tau_{n+1}1_{A(n)^C}.$$ Using this, it is possible to check that \begin{align*}
     E[Z_{\tau_n^K}-Z_{\tau_n}|\mathcal{F}^{\geoRP{X}}_{t_n}] = (Z_{t_n}-& E[Z_{\tau_{n+1}}|\mathcal{F}^{\geoRP{X}}_{t_n}])(1_{A(n,K)}-1_{A(n)}) \\ & + E[Z_{\tau^K_{n+1}}-Z_{\tau_{n+1}}|\mathcal{F}^{\geoRP{X}}_{t_n}]1_{A(n,K)^C}.
     \end{align*} The second term converges by induction hypothesis, and we only need to show $$L_n^K := (Z_{t_n}- E[Z_{\tau_{n+1}}|\mathcal{F}^{\geoRP{X}}_{t_n}])(1_{A(n,K)}-1_{A(n)}) \xrightarrow{K \to \infty} 0, \text{ in } L^2.$$ Now on $A(n,K) \cap A(n)$ and $A(n,K)^c \cap A(n)^c$ we clearly have $L_n^K=0$. Moreover $$1_{A(n,K)^c\cap A(n)}|L_n^K|  \leq 1_{A(n,K)^c\cap A(n)}|\psi^{n,K}(\geoRP{X}|_{[0,t_n]})-E[Z_{\tau_{n+1}}|\mathcal{F}^{\geoRP{X}}_{t_n}]|,$$ since $\psi^{n,K}(\geoRP{X}|_{[0,t_n]})>Z_{t_n} \geq E[Z_{\tau_{n+1}}|\mathcal{F}^{\geoRP{X}}_{t_n}]$ on $A(n,K)^c\cap A(n)$. Similarly, one can show $$1_{A(n,K)\cap A(n)^c}|L_n^K|  \leq 1_{A(n,K)\cap A(n)^c}|\psi^{n,K}(\geoRP{X}|_{[0,t_n]})-E[Z_{\tau_{n+1}}|\mathcal{F}^{\geoRP{X}}_{t_n}]|,$$ and thus \begin{equation}\label{eq:distanceAppendix}
     |L_n^K|  \leq |\psi^{n,K}(\geoRP{X}|_{[0,t_n]})-E[Z_{\tau_{n+1}}|\mathcal{F}^{\geoRP{X}}_{t_n}]|.
     \end{equation} As mentioned above, $\psi^{n,K}$ is the orthogonal projection of the $L^2$ random variable $Z_{\tau^K_{n+1}}$ onto the subspace $\{\langle \mathbf{X}^{\leq K}_{0,t_n},l \rangle : l\in \mathcal{W}^{d+1}\}$, and similarly denote by $\hat{\psi}^{n,K}$ the orthogonal projection of $Z_{\tau_{n+1}}$ to the same space. Then we have \begin{align*}
     \Vert L_n^K \Vert_{L^2} & \leq \Vert \psi^{n,K}(\geoRP{X}|_{[0,t_n]})-\hat{\psi}^{n,K}(\geoRP{X}|_{[0,t_n]}) \Vert_{L^2}+\Vert \hat{\psi}^{n,K}(\geoRP{X}|_{[0,t_n]})- E[Z_{\tau_{n+1}}|\mathcal{F}^{\geoRP{X}}_{t_n}] \Vert_{L^2} \\ & \leq \Vert E[Z_{\tau^K_{n+1}}|\mathcal{F}^{\geoRP{X}}_{t_n}] -E[Z_{\tau_{n+1}}|\mathcal{F}^{\geoRP{X}}_{t_n}]  \Vert_{L^2}+\Vert \hat{\psi}^{n,K}(\geoRP{X}|_{[0,t_n]})- E[Z_{\tau_{n+1}}|\mathcal{F}^{\geoRP{X}}_{t_n}] \Vert_{L^2}.
     \end{align*} Now the first term converges by the induction hypothesis. For the second term, the conditional expectation of the $L^2$ random variable $Z_{\tau_{n+1}}$ is nothing else than the orthogonal projection onto the space $L^2(\mathcal{F}^{\geoRP{X}}_{t_n})$. But by Theorem \ref{mainresultappro}, for any $\epsilon > 0$ we can find $\phi \in L_{\text{Sig}}^{\lambda}$, such that $\Vert \phi(\geoRP{X}|_{[0,t_n]})-Z_{\tau_{n+1}} \Vert_{L^2} \leq \epsilon$. For $K$ large enough we have $\phi(\mathbf{X}|_{[0,t_n]}) \in \{\langle \mathbf{X}^{\leq K}_{0,t_n},l \rangle : l\in \mathcal{W}^{d+1}\}$, and thus $$\Vert \hat{\psi}^{n,K}(\geoRP{X}|_{[0,t_n]})- E[Z_{\tau_{n+1}}|\mathcal{F}^{\geoRP{X}}_{t_n}] \Vert_{L^2} \leq \Vert \hat{\psi}^{n,K}(\geoRP{X}|_{[0,t_n]})-Z_{\tau_{n+1}}\Vert_{L^2} \leq  \epsilon, $$ since $\hat{\psi}^{n,K}$ is such that the distance is minimal.
\end{proof} 

\begin{proof}{\textbf{of Proposition \ref{THM:LSconvergence2}}}
    First, we can consider the sequence of stopping times $(\tau_{n}^{K,J})$ as defined in \eqref{eq:recur2}. One can then rewrite exactly the same proof of Proposition \ref{THM:LSconvergence1} for $Z_{\tau^{K,J}_n}$, writing $ \psi^{n,K,J}$  for the orthogonal projection of $Z_{\tau_{n+1}^{K,J}}$ onto the subspace $\{\langle \mathbf{X}^{\leq K}_{0,t_n}(J),l \rangle : l\in \mathcal{W}^{d+1}\}$. Instead of equation \eqref{eq:distanceAppendix}, we have  \begin{align*}
    L_n^{K,J} & \leq |\psi^{n,K,J}(\geoRP{X}|_{[0,t_n]})-E[Z_{\tau_{n+1}}|\mathcal{F}^{\geoRP{X}}_{t_n}]| \\ & \leq |\psi^{n,K}(\geoRP{X}|_{[0,t_n]})-E[Z_{\tau_{n+1}}|\mathcal{F}^{\geoRP{X}}_{t_n}]| + |\psi^{n,K,J}(\geoRP{X}|_{[0,t_n]})-\psi^{n,K}(\geoRP{X}|_{[0,t_n]})|, 
\end{align*} where the first term converges in $L^2$ due to the same argument as in the proof of Proposition \ref{THM:LSconvergence1}. Then, since we assume that $\langle \Sig{X}_{0,t}(J),v \rangle  \xrightarrow{J \to \infty} \langle \Sig{X}_{0,t},v \rangle$ in $L^2$ for all words $v$, for any fixed $K$ the second term converges in $L^2$. Thus it follows that for all $n=0,\dots, N$ we have \begin{equation}\label{eq:L2convergence}\lim_{K\to \infty}\lim_{J\to \infty}E[Z_{\tau_n^{K,J}}|\mathcal{F}^{\mathbf{X}}_{t_n}]= E[Z_{\tau_n}|\mathcal{F}_{t_n}^{\mathbf{X}}] \text{ in } L^2.\end{equation} Next, we want to show that \begin{equation}\label{eq:asconvergence}\frac{1}{M}\sum_{i=1}^MZ^{(i)}_{\tau^{K,J,(i)}_n} \xrightarrow{M\to \infty} E[Z_{\tau^{K,J}_n}]  \text{ a.s. }\end{equation} Now for any $l\in \mathcal{W}^{d+1}$, we can write $l = \lambda_1w_1+ \dots +\lambda_Dw_D$, where $D = \sum_{k=0}^K(d+1)^k$, that is we sum over all possible words of length at most $K$. One can therefore notice that minimizing $\langle \mathbf{X}^{\leq K},l \rangle$ over $l\in \mathcal{W}^{d+1}_{\leq K}$, is equivalent to minimizing $\sum_{i=1}^D\lambda_i\langle \mathbf{X}^{\leq K},w_i \rangle$ over all vectors $\lambda \in \R^D$. Defining $e_k(\mathbf{x}):=\langle \mathbf{x}^{\leq K},w_k \rangle$ for $k=1,\dots,D$, and setting $X_n:=\mathbf{X}|_{[0,t_n]}$, we are exactly in framework of \cite[Chapter 3]{clement2002analysis}, and the result follows from \cite[Theorem 3.2]{clement2002analysis}, under the following remark. The authors make the following assumption, denoted by (A2) \begin{equation}\label{linearindep}
\sum_{j}\alpha_je_j(X_t) = 0 \text{ almost surely implies } \alpha =0, \forall t
\end{equation}for the set of basis-functions, which allows an explicit representation of the coefficient $l^{\star}$ in \eqref{eq:orthogonalprojections}. Of course, in our framework, such an assumption cannot hold true, as this would correspond to $$\sum_{l=1}^D\alpha_l\langle \Sig{X}_{0,t},w_l \rangle = 0 \text{ a.s. } \Longrightarrow \alpha_l=0, \forall l =1,\dots,D.$$ Since we consider the signature of the time-augmented path $(t,X_t)$, the purely deterministic components of the signature contradict this assumption. However, for a fixed signature level $K$, we can always discard linear-dependent (in the sense of \eqref{linearindep}) components of the signature, that is minimize over the basis-functions $$\{\tilde{e}_1,\dots,\tilde{e}_{\tilde{D}}\} \subset \{e_1,\dots, e_D\} \text{ s.t. (A2) holds },$$ for the largest possible $\tilde{D} \leq D$. The resulting least-square problem \eqref{eq:orthogonalprojections} over $\R^{\tilde{D}}$, with respect to $\{\tilde{e}_1,\dots,\tilde{e}_{\tilde{D}}\}$, has an explicit representation of the solution, and since the two sets of basis-functions generate the same subspace of $L^2$, the explicit solution is also optimal for the original problem. Thus, for a fixed level $K$, we can proceed with the reduced set of basis-functions, for which the assumption (A2) holds by definition, and we can apply \cite[Theorem 3.2]{clement2002analysis}. Finally, the last convergence directly follows by combining \eqref{eq:L2convergence} and \eqref{eq:asconvergence}.
\end{proof} 
\subsection{Proofs in Section \ref{sec:DPsignaturesection}}
\label{appendixSAA}
\begin{proof}{\textbf{of Proposition \ref{propositionapproxi}}}
 The existence of a minimizer is proved in Lemma \ref{existencelemma}. We can find the discrete Doob-martingale $M^{\star,N}$ and write \begin{equation*}
        y_0^N = E\left[\max_{0\leq n \leq N}\left(Z_{t_n}-M^{\star,N}_{t_n}\right)\right].\end{equation*} Define the continuous-time, $(\mathcal{F}^{\mathbf{X}}_t)-$martingale $M_t:=E[M^{\star,N}_T|\mathcal{F}^{\mathbf{X}}_t]$, and notice that $M_{t_n}=M^{\star,N}_{t_n}$. Let us recall the following notation introduced in Section \ref{sec:DPsignaturesection}: for any $l\in (\mathcal{W}^{d+1})^m$, we define the martingale $M^l$ to be $$M^l_t = \int_0^t\langle \mathbf{X}^{<\infty}_{0,s},l\rangle ^{\top}dW_s=\sum_{i=1}^{m}\int_0^t\langle \mathbf{X}^{<\infty}_{0,s},l^i\rangle dW^i_s.$$ An application of the martingale approximation in Theorem \ref{mainresultDP} shows that for all $\epsilon >0$, there exist an $l^{\epsilon}=(l^{i,\epsilon})_{i=1}^m$ in $\left (\mathcal{W}^{d+1}\right)^m$, such that $$E\left [\max_{0\leq n \leq N}\left (M_{t_n}^{\star,N}-M_{t_n}^{l^{\epsilon}}\right )\right ]\leq \epsilon.$$ Thus we have\begin{equation*}
            y_0^N= E\left[\max_{0\leq n \leq N}\left(Z_{t_n}-M^{\star,N}_{t_n}\right)\right] \geq E\left[\max_{0\leq n \leq N}\left(Z_{t_n}-M^{l^{\epsilon}}_{t_n}\right)\right]-\epsilon.
        \end{equation*} Now since $y_0^{K,N} \geq y_0^{N}$, we can find $K$ large enough, such that \begin{align*}
            0 \leq y_0^{K,N}-y_0^N & \leq \inf_{l \in \left (\mathcal{W}_{\leq K}^{d+1}\right)^m}E\left[\max_{0\leq n \leq N}\left(Z_{t_n}-M^{l}_{t_n}\right)\right]-E\left[\max_{0\leq n \leq N}\left(Z_{t_n}-M^{l^{\epsilon}}_{t_n}\right)\right]+\epsilon \\ & \leq \epsilon,
        \end{align*} where the last inequality follows from that fact that $l^{\epsilon}\in \left (\mathcal{W}_{\leq K}^{d+1}\right)^m$ for $K$ large enough.
\end{proof}
\begin{lemma}\label{existencelemma}
    The minimization problem \begin{equation*}
    y_0^{K,N} = \inf_{l \in \left (\mathcal{W}_{\leq K}^{d+1}\right)^m}E\left[\max_{0\leq n\leq N}\left(Z_{t_n}-M^{l}_{t_n}\right)\right]
\end{equation*} has a solution.
\end{lemma}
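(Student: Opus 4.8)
The plan is to recast \eqref{approx1} as a convex minimization over a finite-dimensional Euclidean space that is coercive \emph{modulo} a linear subspace, and then invoke the classical Weierstrass argument. Following Remark~\ref{linearprogramm}, enumerate all words of length at most $K$ as $w_1,\dots,w_D$ with $D=\sum_{k=0}^K(d+1)^k$, and write $l^i=\sum_{r=1}^D\lambda^i_rw_r$, so that $l\mapsto M^l$ is linear in the coefficient vector $\lambda=(\lambda^1,\dots,\lambda^m)\in\R^{mD}$ and \eqref{approx1} becomes $y_0^{K,N}=\inf_{\lambda\in\R^{mD}}F(\lambda)$, where $F(\lambda):=E\big[\max_{0\le n\le N}(Z_{t_n}-M^\lambda_{t_n})\big]$. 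First I would record that $F$ is real-valued and convex: finiteness follows from $\big|\max_n(Z_{t_n}-M^\lambda_{t_n})\big|\le\sup_t|Z_t|+\sum_{n}|M^\lambda_{t_n}|$ together with $\sup_t|Z_t|\in L^2$ and $M^l\in\mathcal{M}_0^2$; convexity holds because $\lambda\mapsto M^\lambda_{t_n}(\omega)$ is linear, hence $\lambda\mapsto\max_n\big(Z_{t_n}(\omega)-M^\lambda_{t_n}(\omega)\big)$ is convex, and the expectation preserves convexity. In particular $F$, being a finite convex function on $\R^{mD}$, is continuous.

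The crucial point — and the only genuinely delicate step — is coercivity, which fails on all of $\R^{mD}$ because the linear map $\lambda\mapsto M^\lambda_{t_N}$ generally has a nontrivial kernel (the signature of the time-augmented path carries deterministic entries and linear dependencies, the same phenomenon underlying assumption (A2) in the proof of Proposition~\ref{THM:LSconvergence2}), and $F$ is constant in those directions. Set $\mathcal{N}:=\{\lambda\in\R^{mD}:M^\lambda_{t_N}=0\text{ a.s.}\}$. By the It\^o isometry, $M^\lambda_{t_N}=0$ a.s.\ forces $\langle\mathbf{X}^{\le K}_{0,s},l^i\rangle=0$ for a.e.\ $s\le T$, a.s., for each $i$, hence $M^\lambda_{t_n}=0$ a.s.\ for every $n\le N$; thus $F$ is invariant under translations by $\mathcal{N}$ and $\inf_{\R^{mD}}F=\inf_{\mathcal{N}^\perp}F$. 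The coercivity estimate I would establish on $\mathcal{N}^\perp$ is: for every $\lambda$,
\[
\max_{0\le n\le N}\big(Z_{t_n}-M^\lambda_{t_n}\big)\;\ge\;\max_{0\le n\le N}\big(-M^\lambda_{t_n}\big)-\sup_t|Z_t|\;\ge\;(M^\lambda_{t_N})^{-}-\sup_t|Z_t|,
\]
and, since $M^\lambda$ is a mean-zero martingale, $E[(M^\lambda_{t_N})^-]=\tfrac12E[|M^\lambda_{t_N}|]$, so $F(\lambda)\ge\tfrac12\|M^\lambda_{t_N}\|_{L^1}-E[\sup_t|Z_t|]$. Since $\{M^\lambda_{t_N}:\lambda\in\R^{mD}\}$ is a finite-dimensional subspace of $L^1(\Omega)$ on which $\|\cdot\|_{L^1}$ is a norm, equivalence of norms on finite-dimensional spaces gives $\|M^\lambda_{t_N}\|_{L^1}\ge c\,\mathrm{dist}(\lambda,\mathcal{N})$ for some $c>0$, hence $\|M^\lambda_{t_N}\|_{L^1}\ge c\|\lambda\|$ for $\lambda\in\mathcal{N}^\perp$, and therefore $F(\lambda)\to\infty$ as $\|\lambda\|\to\infty$ within $\mathcal{N}^\perp$.

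To conclude, with $C:=F(0)$ the sublevel set $\{\lambda\in\mathcal{N}^\perp:F(\lambda)\le C\}$ is closed (continuity of $F$), bounded (coercivity on $\mathcal{N}^\perp$) and nonempty, hence compact; a continuous function attains its minimum on it, at some $\lambda^\star\in\mathcal{N}^\perp$, and $\lambda^\star$ then minimizes $F$ over $\mathcal{N}^\perp$, hence over $\R^{mD}$ by $\mathcal{N}$-invariance. The corresponding $l^\star=(l^{\star,1},\dots,l^{\star,m})$, $l^{\star,i}=\sum_{r=1}^D\lambda^{\star,i}_rw_r$, solves \eqref{approx1}. I expect the bookkeeping in the coercivity step — verifying that $\|\cdot\|_{L^1}$ restricts to a genuine norm on the finite-dimensional range of $\lambda\mapsto M^\lambda_{t_N}$ and that its kernel is exactly $\mathcal{N}$ — to be the main thing requiring care; everything else is routine convex analysis. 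An essentially equivalent but slightly slicker route would be to first prove the claim under assumption (A2) of \cite{clement2002analysis} (so that $\mathcal{N}=\{0\}$ and $F$ is coercive and continuous on $\R^{mD}$ outright) and then reduce the general case to it by discarding linearly dependent signature entries, exactly as in the proof of Proposition~\ref{THM:LSconvergence2}.
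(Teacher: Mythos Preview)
Your argument is essentially the same as the paper's: both establish convexity of the objective and then prove coercivity via the lower bound $F(\lambda)\ge \tfrac12\,E|M^{\lambda}_{t_N}|-C$. The paper reaches that bound through $\max_n(Z_{t_n}-M^l_{t_n})\ge\max(Z_T-M^l_T,0)=\tfrac12(Z_T-M^l_T+|M^l_T-Z_T|)$ and the reverse triangle inequality, whereas you use $\max_n(-M^\lambda_{t_n})\ge (M^\lambda_{t_N})^{-}$ and $E[(M^\lambda_{t_N})^{-}]=\tfrac12 E|M^\lambda_{t_N}|$; these are minor variants of the same estimate.

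The one substantive difference is the treatment of the kernel. The paper argues directly that $\inf_{|\hat l|=1}E|M^{\hat l}_T|>0$ by asserting that $\langle\mathbf{X}^{<\infty}_{0,s},\hat l^{\star}\rangle=0$ for a.e.\ $s$ a.s.\ forces $\hat l^{\star}=0$, and then concludes coercivity on all of the parameter space. You instead allow for a nontrivial kernel $\mathcal{N}$, note that $F$ is $\mathcal{N}$-invariant, and establish coercivity only on $\mathcal{N}^{\perp}$ via equivalence of norms on the finite-dimensional image. Your version is more robust in that it does not rely on linear independence of the truncated (robust) signature entries as processes --- a point the paper itself flags as delicate in the proof of Proposition~\ref{THM:LSconvergence2} --- while the paper's version is shorter when that independence holds. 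Either way the conclusion follows; your proof is correct.
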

\begin{proof} First notice that $l \mapsto E\left [\max_{0\leq n\leq N}\left(Z_{t_n}-M_{t_n}^{l}\right)\right ]$ is convex. Then we have \begin{align*}
        E\left [\max_{0\leq n\leq N}\left(Z_{t_n}-M_{t_n}^{l}\right)\right ] & \geq E\left[\max\left(Z_T-M_{T}^{l},0\right)\right] \\ &=\frac{1}{2}E\left[Z_T-M_{T}^{l} + \left|M_{T}^{l}-Z_T\right|\right] \\ & \geq \frac{1}{2}E\left[\left|M_{T}^{l}\right|\right] + E[\max(-Z_T,0)],
    \end{align*}where the equality in the middle uses $\max(A-B,0) = \frac{1}{2}\left(A-B + |B-A|\right)$. Now for any word $l=\lambda_1w_1 + \dots + \lambda_nw_n$, we set $|l| = \sum_{i=1}^n|\lambda_i|$, and notice that \begin{equation}
        \frac{1}{2}E\left[\left|M_{T}^{\l}\right|\right] = \frac{1}{2}|l|E[|M_{T}^{l/|l|}|] \geq \frac{|l|}{2}\inf_{\widehat{l} \in \left (\mathcal{W}_{\leq K}^{d+1}\right)^m,|\widehat{l}|=1}E[|M^{\widehat{l}}_{T}|].\label{infi}
    \end{equation} Since $\widehat{l} \mapsto E[|M^{\widehat{l}}_{T}|]$ is continuous and the set $\{\widehat{l} \in \left (\mathcal{W}_{\leq K}^{d+1}\right)^m: |\widehat{l}|=1\}$ is compact, the minimum on the right hand-side of \eqref{infi} is attained. Assume now that $\inf_{\widehat{l} \in \left (\mathcal{W}_{\leq K}^{d+1}\right)^m,|\widehat{l}|=1}E[|M^{\widehat{l}}_{T}|]=0$. Then there exists an $\widehat{l}^{\star}$ with $|\widehat{l}^{\star}|=1$ and $|M_T^{\widehat{l}^{\star}}|=0$ almost surely. But notice that $M^{\widehat{l}^{\star}}$ is a true martingale with quadratic variation given by $[M^{\widehat{l}^{\star}}]_T=\sum_{i=1}^m\int_0^T\langle \mathbf{X}^{\leq K}_{0,s},\widehat{l}^{\star,i} \rangle^2 ds$. Since in particular $(M^{\widehat{l}^{\star}})^2 =0$ almost surely, the same is true for the quadratic variation, and hence in particular for each term it holds that $\int_0^T\langle \mathbf{X}^{\leq K}_{0,s},\widehat{l}^{\star,i} \rangle^2 ds=0$ almost surely. But this implies that for all $i=1,\dots,m$ \begin{equation*}
        \langle \mathbf{X}^{\leq K}_{0,s},\widehat{l}^{\star,i} \rangle = 0,\quad \text{ for almost every } s\in [0,T] \text{ almost surely.}
    \end{equation*} But this is only possible if $\widehat{l}^{\star}=0$, contradicting the fact that $|\widehat{l}^{\star}|=1$. Hence the infimum \eqref{infi} is positive and we can conclude that the function $$l \mapsto E\left [\max_{0\leq n\leq N}\left(Z_{t_n}-M_{t_n}^{l}\right)\right ] \xrightarrow{|l| \rightarrow \infty} \infty,$$ which implies the existence of the minimizer.
\end{proof}
Finally, in order to prove Proposition \ref{THM:SAAapproxi2}, we quickly introduce the general idea of sample average approximation (SAA), for which we refer to \cite[Chapter 6]{shapiro2003monte} for details. Assume $\mathcal{X}$ is a closed and convex subset of $\R^N$ and $\xi$ is a random vector in $\mathbb{R}^d$ for some $d,N\in \N$, and $F$ is some function $F:\mathbb{R}^N\times \mathbb{R}^d \rightarrow \mathbb{R}$. We are interested in approximating the stochastic programming problem \begin{equation}\label{eq:stochasticprogramming}
y_0 = \min_{x\in \mathcal{X}}E[F(x,\xi)].
\end{equation} Define the sample average function $F^M(x) = \frac{1}{M}\sum_{j=1}^MF(x,\xi^{j})$, where $\xi^{j},j=1,\dots,M$ are i.i.d samples of the random vector $\xi$. The sample average approximation of $y_0$ is then given by \begin{equation} \label{eq:SAAdef}
y_0^M = \min_{x\in \mathcal{X}} F^M(x).
\end{equation} The following result provides sufficient conditions for the convergence $y_0^M \xrightarrow{M \to \infty} y_0$, and a more general version can be found in \cite[Chapter 6 Theorem 4]{shapiro2003monte}. \begin{theorem}\label{thm:shapiro}
Suppose that \begin{itemize}
\item[(1)] $F$ is measurable and $x \mapsto F(x,\xi)$ is lower semicontinuous for all $\xi \in \R^d$,
\item[(2)] $x \mapsto F(x,\xi)$ is convex for almost every $\xi$,
\item[(3)] $\mathcal{X}$ is closed and convex,
\item[(4)] $f(x):=E[F(x,\xi)]$ is lower semicontinuous and $f(x)<\infty$ for all $x\in \mathcal{X}$,
\item[(5)]the set $S$ of optimal solutions to \eqref{eq:stochasticprogramming} is non-empty and bounded.
\end{itemize} Then $y_0^M \xrightarrow{M \to \infty} y_0$.
\end{theorem}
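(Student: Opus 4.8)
The plan is to deduce the statement from the general theory of epi-convergence for convex integrands, which is exactly the machinery behind \cite[Chapter~6]{shapiro2003monte}. First I would absorb the feasible set into the objective: set $g(x):=f(x)+\iota_{\mathcal X}(x)$ and $g^M(x):=F^M(x)+\iota_{\mathcal X}(x)$, where $\iota_{\mathcal X}$ denotes the indicator of $\mathcal X$ (zero on $\mathcal X$, $+\infty$ elsewhere). Since $\mathcal X$ is closed and convex, $\iota_{\mathcal X}$ is proper, lower semicontinuous and convex, so by hypotheses (1)--(4) (and, as in the concrete setting \eqref{finalminimizationapprox}, integrability of $F(x,\cdot)$, which forces $f(x)\in\R$) both $g^M$ and $g$ are proper lsc convex functions on $\R^N$, and $y_0=\inf_{\R^N}g$, $y_0^M=\inf_{\R^N}g^M$.

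Next I would establish pointwise convergence on a countable dense set $D\subset\R^N$: for each fixed $x\in D$ the strong law of large numbers gives $F^M(x)\to f(x)$ on a full-measure event, and intersecting these countably many events produces a single full-measure event $\Omega_0$ on which $F^M(x)\to f(x)$ for all $x\in D$ simultaneously. Working on $\Omega_0$, I would then upgrade this to epi-convergence $g^M\xrightarrow{e}g$: here convexity (hypothesis (2)) is essential, since for convex functions pointwise convergence on a dense set is \emph{equivalent} to epi-convergence (and implies locally uniform convergence on the interior of $\operatorname{dom}g$), and adding back the common term $\iota_{\mathcal X}$ preserves epi-convergence. The ``easy'' inequality $\limsup_M y_0^M\le y_0$ is then immediate: pick $x^\star\in S$ (nonempty by (5)), take a recovery sequence $x_M\to x^\star$ with $\limsup_M g^M(x_M)\le g(x^\star)=y_0$, and use $y_0^M\le g^M(x_M)$.

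The delicate part, and the main obstacle, is the reverse inequality $\liminf_M y_0^M\ge y_0$, since epi-convergence alone does not prevent the minimizing sequences of the $g^M$ from escaping to infinity. This is precisely where hypothesis (5) is used: a proper convex lsc function with a nonempty \emph{bounded} set of minimizers has recession cone $\{0\}$, hence all its nonempty sublevel sets are bounded (level-boundedness); and for convex functions epi-convergence transfers this to the approximants, i.e.\ $\{g^M\}$ is \emph{eventually level-bounded} --- for $\alpha:=y_0+1$ there exist $M_0$ and a bounded set $B$ with $\{g^M\le\alpha\}\subseteq B$ for all $M\ge M_0$. Consequently, for $M\ge M_0$, minimizing $g^M$ over $\R^N$ is the same as minimizing it over the compact set $\overline B$, and the standard consequence of epi-convergence on a fixed compact set (with $S\cap B\ne\emptyset$) gives $y_0^M=\inf_{\overline B}g^M\to\inf_{\overline B}g=y_0$; see \cite[Chapter~6]{shapiro2003monte}. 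This holds on $\Omega_0$, hence almost surely; as a by-product each $g^M$ attains its infimum for $M\ge M_0$, which also supplies the existence statement invoked in Proposition~\ref{THM:SAAapproxi2}. The only points needing care beyond routine bookkeeping are the measurability of $\omega\mapsto y_0^M(\omega)$ (standard, $g^M$ being a random lsc convex function) and the integrability of $F(x,\cdot)$ underlying the SLLN, both of which are available in the concrete problem \eqref{finalminimizationapprox} by the boundedness assumptions on $Z$ and $\mathbf X^{\le K}$.
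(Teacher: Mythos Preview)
The paper does not give its own proof of this theorem: it is stated only as a citation of \cite[Chapter~6, Theorem~4]{shapiro2003monte} and then invoked as a black box in the proof of Proposition~\ref{THM:SAAapproxi2}. Your sketch is a correct outline of the standard epi-convergence argument that underlies Shapiro's result---SLLN on a countable dense set, upgrade to epi-convergence via convexity, and transfer of level-boundedness from the boundedness of $S$ (using that all nonempty sublevel sets of a proper convex lsc function share the same recession cone)---so you have in fact supplied more than the paper does. The one place to be careful is the claim that epi-convergence of convex functions automatically propagates level-boundedness to the approximants; this is true, but it is a nontrivial fact (stability of recession functions under epi-convergence in the convex case) and would deserve an explicit reference in a full write-up.
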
 Similar as in the proof of Proposition \ref{THM:LSconvergence2}, for a fixed truncation level $K \in \mathbb{N}$, we denote by $D$ the number of components of the truncated signature, that is $D = \sum_{k=0}^K(d+1)^k$. Now every $l\in \mathcal{W}^{d+1}_{\leq K}$  is of the form $l= \sum_{i=1}^D\beta_iw_i$ for some $\beta \in \mathbb{R}^D$, where $w_1,\dots,w_D$ are all words of length at most $K$. In particular, for every $l\in (\mathcal{W}^{d+1}_{\leq K})^m$, there is an $\beta \in \mathbb{R}^{m \times D}$ and our usual notation reads \begin{equation} \label{eq:MGcoeff}M^l_t =\int_0^t\langle \mathbf{X}^{\leq K}_{0,s},l\rangle ^{\top}dW_s = \sum_{j=1}^D(\beta^{j})^{\top}\int_0^t\langle \mathbf{X}^{\leq K}_{0,s},w_j\rangle dW_s,\end{equation} and instead of minimizing over $l$, we can minimize over $\beta$. Let us now formulate the minimization problem in Proposition \ref{THM:SAAapproxi2} in the language of Theorem \ref{thm:shapiro}: \begin{align*}
    E\left[\max_{0\leq n \leq N}\left(Z_{t_n}-M^{l}_{t_n}\right)\right] & =E\left[\max_{0\leq n \leq N}\left(Z_{t_n}-\sum_{i=1}^m\sum_{j=1}^D\beta^{ij}\int_0^t\langle \mathbf{X}^{\leq K}_{0,s},w_j\rangle dW^i_s\right)\right] \\ &= E\left[\max_{0\leq n \leq N}\left(Z_{t_n}-\mathbf{\beta}^{\top}\mathbf{M}_{t_n}\right)\right],
\end{align*} where we identify $\beta \in \mathbb{R}^{m\times D} \cong \mathbb{R}^{m \cdot D}$ and $\mathbf{M}_t$ is the $m\cdot D$-dimensional vector $$\left (\int_0^t\langle \mathbf{X}^{\leq K}_{0,s},w_j\rangle dW^i_s: 1\leq i \leq m, 1 \leq j \leq D\right ).$$ Finally, defining the random vector $$\xi := \left (Z_{t_0},\mathbf{M}^{\top}_{t_0},\dots,Z_{t_N},\mathbf{M}^{\top}_{t_N}\right ) \in \R^{(D+1)\cdot (N+1)\cdot m},$$ we can notice that \begin{equation}\label{eq:discretizedapprox}
    \inf_{l \in (\mathcal{W}^{d+1}_{\leq K})^m}E\left[\max_{0\leq n \leq N}\left(Z_{t_n}-M^{l}_{t_n}\right)\right] = \min_{x \in \mathbb{R}^{m\cdot D}}E[F(x,\xi)],
\end{equation} where $$F(x,\xi) := \max_{0 \leq n \leq N}\left ( \xi_{n (m \cdot D+1)}-x^{\top}\begin{pmatrix}
    \xi_{n (m \cdot D+1)+1} \\ \vdots \\ 
    \xi_{n (m \cdot D+1)+1+m\cdot D}
\end{pmatrix}\right ).$$ Therefore, setting $\mathcal{X}= \mathbb{R}^{D\cdot m}$ and $d= (D+1)(N+1)m$, the minimization problem $y_0^{K,N,J,M}$ in Proposition \ref{THM:SAAapproxi2} can simply be written in the SAA formulation \eqref{eq:SAAdef}, where additionally the stochastic integrals in $\mathbf{M}$ are replaced by the discretized versions $\mathbf{M}^{J}$. \\
 \begin{proof}{\textbf{of Proposition \ref{THM:SAAapproxi2}}} First, it is possible to rewrite the proof of Lemma \ref{existencelemma} for $F^M$ instead of the expectation when $M$ is large enough, to show that there exists a minimizer $l^{\star}$  to \eqref{finalminimizationapprox}.  Moreover, denote by $y_0^{K,N,J}$ the minimization problem \eqref{eq:discretizedapprox}, where we replace $M^l$ be the discretized version $M^{l,J}$ as described in Section \ref{sec:SAAsection}. By the same reasoning as in Proposition \ref{propositionapproxi}, we can show that $\lim_{K\to \infty}\lim_{J\to \infty}y_0^{K,N,J} = y_0^{N}$. Now for fixed $K,J,N$, we are left with showing almost sure convergence $y_0^{K,N,J,M}\xrightarrow{M \to \infty} y_0^{K,N,J}$. But this can be deduced from Theorem \ref{thm:shapiro}, if we can show that (1)-(5) hold true for our $F$. Clearly $F$ is measurable and it is easy to see that $x \mapsto F(x,\xi)$ is continuous and convex, thus (1) and (2) readily follow. Moreover (3) holds true, and in order to show (4), set $f(x)= E[F(x,\xi)]$ and notice that for $x_1,x_2 \in \mathcal{X}$ we have \begin{align*}
     |f(x_1)-f(x_2)| \leq E\left [\max_{0\leq n \leq N}\left ( (x_1-x_2)^{\top}\mathbf{M}^{J}_{t_n})\right )\right ] \leq \Vert x_1-x_2 \Vert E\left [\max_{0 \leq n \leq N}\Vert \mathbf{M}^{J}_{t_n} \Vert\right ],
 \end{align*} where we simply applied the triangle and Cauchy-Schwarz inequalities. Since $M^{l} \in L^2$ for all $l$, and application of Doobs inequality shows that the right hand side $E\left [\max_{0 \leq n \leq N}\Vert \mathbf{M}^{J}_{t_n} \Vert\right ]<\infty$, and therefore (4) follows. Finally, non-emptyness of $S$ follows from Lemma \ref{existencelemma}, and the proof of the latter reveals that $E[F(x,\xi)] \xrightarrow{|x|\to \infty} \infty$, and thus $S$ must be bounded, which finishes the proof.
\end{proof}

\bibliographystyle{plain}
\bibliography{BIB}

\end{document}